\documentclass[journal]{IEEEtran}
\IEEEoverridecommandlockouts
\usepackage{graphicx} 
\usepackage{gensymb}
\usepackage[acronym,nonumberlist]{glossaries}
\makeglossaries
\usepackage{comment}
\usepackage{amsfonts}
\usepackage{bbm}
\usepackage{caption}
\usepackage{subcaption}
\usepackage{graphicx}
\usepackage{multirow}
\usepackage{cite}

\usepackage{amsmath}
\usepackage[ruled,vlined]{algorithm2e}

\newtheorem{proposition}{Proposition}
\usepackage{cite}

\usepackage{booktabs}

\usepackage{tikz}
\usepackage{url}
\usetikzlibrary{shapes.geometric, arrows}

\tikzstyle{startstop} = [rectangle, rounded corners, minimum width=3cm, minimum height=1cm, text centered, draw=black, fill=red!30]
\tikzstyle{process} = [rectangle, minimum width=3cm, minimum height=1cm, text centered, draw=black, fill=orange!30]
\tikzstyle{decision} = [diamond, minimum width=3cm, minimum height=1cm, text centered, draw=black, fill=green!30]
\tikzstyle{arrow} = [thick,->,>=stealth]

\title{Fair and Efficient Scheduling for Sensor Networks via Online Whittle Index Policy}

\author{
\IEEEauthorblockN{
Sokipriala Jonah\IEEEauthorrefmark{1},
Seong Ki Yoo\IEEEauthorrefmark{2},
Saurav Sthapit\IEEEauthorrefmark{3},
Anita Khadka\IEEEauthorrefmark{4}
}

\IEEEauthorblockA{
\IEEEauthorrefmark{1}
Centre for Computational Science and Mathematical Modelling,\\
Coventry University, Coventry, UK
}

\IEEEauthorblockA{
\IEEEauthorrefmark{2}
Wireless Transmission, BT Group, Ipswich, UK
}

\IEEEauthorblockA{
\IEEEauthorrefmark{3}
Software and Security Group, Analog Devices, Edinburgh, UK
}

\IEEEauthorblockA{
\IEEEauthorrefmark{4}
WMG, University of Warwick, UK
}
}

\begin{document}

\newacronym{qos}{QoS}{Quality of Service}
\newacronym{rr}{RR}{Round Robin}
\newacronym{rl}{RL}{Reinforcement Learning}
\newacronym{iot}{IoT}{Internet of thing}
\newacronym{wur}{WUR}{Wake Up Radio}
\newacronym{wus}{WUS}{Wake Up Signal}
\newacronym{pcr}{PCR}{Primary Communication Radio}
\newacronym{ewma}{EWMA}{Exponential Weighted Moving Average}
\newacronym{ack}{ACK}{Acknowledgement}
\newacronym{mse}{MSE}{Mean Squared Error}
\newacronym{rmse}{RMSE}{Root Mean Squared Error}
\newacronym{ucb}{UCB}{Upper Confidence Bound}
\newacronym{aoii}{AoII}{Age of Incorrect Information}
\newacronym{aoi}{AoI}{Age of  Information}
\newacronym{rmab}{RMAB}{Restless Multi-Armed Bandit}
\newacronym{mdp}{MDP}{Markov Decision Process}
\newacronym{mab}{MAB}{Multi-Armed Bandit}
\newacronym{pdr}{PDR}{Packet Delivery Ratio}
\newacronym{snr}{SNR}{Signal to Noise Ratio}
\newacronym{voi}{VoI}{Value of Information}
\newacronym{wsns}{WSNs}{Wireless Sensor Networks} 
\newacronym{lsip}{L-SIP}{Linear Spanish Inquisition Protocol}
\newacronym{fwaoii}{FWAoII}{Fair Whittle index AoII}
\newacronym{waoi}{WAoI}{Whittle index AoI}
\newacronym{waoii}{WAoII}{Whittle index AoII} 
\newacronym{kf}{KF}{Kalman filter}
\newacronym{kfe}{KFE}{KF-Error} 

\newglossaryentry{x}{
    name={\ensuremath{x(k)}},
    description={State representation of \ensuremath{x} at time \ensuremath{k}}
}

\newglossaryentry{xdot}{
    name={\ensuremath{\dot{x}}},
    description={Rate of change of the state \ensuremath{x}}
}

\newglossaryentry{S}{
    name={\ensuremath{\mathbf{S}}},
    description={State matrix}
}

\newglossaryentry{z}{
    name={\ensuremath{z(k)}},
    description={Measurements received by the sink at time \ensuremath{k}}
}

\newglossaryentry{H}{
    name={\ensuremath{H}},
    description={Observation matrix}
}

\newglossaryentry{alpha}{
    name={\ensuremath{\alpha}},
    description={Weighting factor in state estimation}
}

\newglossaryentry{beta}{
    name={\ensuremath{\beta}},
    description={Weighting factor for the rate of change}
}

\newglossaryentry{pdr}{
    name={PDR},
    description={Packet Delivery Ratio}
}

\newglossaryentry{snr}{
    name={SNR},
    description={Signal-to-Noise Ratio}
}

\maketitle
\begin{abstract}
  The \gls{wur} allows resource-constrained, battery-powered sensor nodes to remain in a low-power deep sleep mode while continuously listening for a \gls{wus}.
  The sensor nodes only wake up to transmit packets once the \gls{wus} is received, thereby significantly reducing energy consumption.
However, if the data transmitted by these nodes provides only minimal or no meaningful updates to the remote monitor, the polling process can still lead to wasted energy and increased remote storage demands.
To address this issue, we use \gls{aoii} as a metric to prioritise the polling of nodes that provide valuable updates to the remote monitor.
Polling the best combination of nodes based on \gls{aoii} can be considered as a \gls{rmab} problem, which typically requires knowledge of the monitored process’s transition dynamics; however, these dynamics are often unknown.
To address the unknown dynamics, we propose an online learning process  using state estimation to develop optimal \gls{waoii} and \gls{fwaoii} policies. These policies efficiently poll nodes for relevant updates without relying on assumptions about the underlying transition probabilities.
Our experimental results on both real and synthetic datasets show that the proposed online \gls{waoii} can reduce packet transmissions by up to 70\% compared to the commonly used \gls{rr} technique in \gls{wur} systems, while still keeping \gls{rmse} values within the acceptable error tolerance of the applications.
These findings highlight the significant potential of \gls{waoii} and \gls{fwaoii}  as an effective low-power polling technique for \gls{wur} networks.
\end{abstract}

\glsresetall

\begin{IEEEkeywords}
\gls{wur}, energy efficiency, Whittle index, \gls{aoii}, edge mining 
\end{IEEEkeywords}

\glsresetall
\section{Introduction}
\label{sec:introduction}
In many \gls{wsns}, such as those deployed in smart buildings, industrial monitoring, and environmental sensing, sensor nodes continuously collect and transmit data to a central sink. While this supports real time monitoring, transmitting every observation regardless of its informational value rapidly depletes energy resources, congests the network, and increases the processing load at the sink.
In addition to energy constraints, frequent updates place further strain on network bandwidth and storage capacity at the sink or remote server~\cite{kriouile2023pull,trihinas2017admin,gaura2013edge}. These challenges motivate the development of intelligent transmission strategies that selectively update only when significant state changes occur or when the transmitted information is most valuable.
Given that packet transmission is typically the most energy intensive operation in \gls{wsns}, there is growing interest in goal-oriented communication, where nodes are polled or activated only when their updates provide meaningful information. This shifts the focus from periodic sampling to intelligent polling strategies that aim to conserve energy while maintaining the quality of information at the sink.

A promising metric for designing such strategies is the \gls{aoii}~\cite{tang2024learn,holm2023goal}, which quantifies how outdated or incorrect the sink's estimate of a node’s state becomes over time. Minimising the long-term average \gls{aoii} enables more timely and relevant updates from distributed nodes, while reducing redundant transmissions.
Previous work, as reviewed in Section~\ref{section:background}, has proposed \gls{aoii}-based Whittle index policies under the assumption that the transition dynamics of the observed process are known either to the estimator or to the remote monitor. However, this assumption rarely holds in practical deployments, where system dynamics are often unknown or non-stationary~\cite{liang2024bayesian}. Furthermore, prior approaches frequently overlook fairness in scheduling, leading to scenarios where certain nodes are rarely polled, resulting in outdated or biased system views.

Recent research has explored polling strategies that aim to reduce communication overhead while preserving estimation accuracy. Anay~\textit{et al.}~\cite{deshpande2023energy} proposed a censored Kalman-based polling approach that prioritises sensors with the highest covariance traces, effectively selecting nodes that contribute most to reducing estimation error at the sink.
Saad~\textit{et al.}~\cite{kriouile2023pull} introduced an \gls{aoii}-based polling strategy, where \gls{aoii} serves as a penalty metric that grows as the information at the sink becomes outdated. By polling nodes that minimise the average \gls{aoii}, their approach ensures that the sink is consistently updated with the most relevant information. Building on this, a closed-form Whittle index policy was developed to minimise average \gls{aoii}, under the assumption that the transition probabilities of the observed processes are known.
However, in many real world scenarios particularly in dynamic and heterogeneous wireless environments such transition models are often unavailable or difficult to estimate reliably~\cite{liang2024bayesian}. Additionally, existing methods seldom address the fairness of polling schedules, potentially leaving some nodes unpolled for extended periods, resulting in skewed or incomplete system views.

In this paper, we focus on the general problem of intelligent polling using \gls{aoii} without assuming prior knowledge of system dynamics. We prove that the \gls{aoii}-based polling problem is indexable and admits a threshold based Whittle index policy. Furthermore, we show that the optimal threshold can be learned dynamically using only local state estimates, eliminating the need for predefined transition models. To address fairness, we propose a fairness-aware Whittle index policy that balances polling opportunities across all nodes.
To demonstrate the practicality of our approach, we apply it to a Wake-Up Radio (\gls{wur}) system based on IEEE 802.11ba, where sensor nodes remain in deep sleep until explicitly polled. Standard \gls{wur} implementations typically use round-robin or top-\(k\) polling schemes, which do not account for the value of the information provided by each update. In contrast, our method intelligently prioritises updates based on estimated information value, significantly reducing redundant transmissions while maintaining estimation accuracy an essential requirement for energy-constrained sensing applications.
Polling decisions become especially challenging when a sink monitoring \( N \) nodes is limited to polling only \( M \) at a time based on channel constraint. This problem can be formulated as a \gls{rmab}, which is known to be \textsc{P}-space hard~\cite{hunt1998nc,wang2020restless,wang2023optimistic}. As the number of nodes increases, the complexity of finding the optimal policy grows rapidly, rendering exact solutions infeasible in practice.

To overcome this challenge, we build on the state estimation technique proposed by Gaura~\textit{et al.}~\cite{gaura2013edge}, as detailed in Section~\ref{section:edge_mining}. This enables us to dynamically estimate \gls{aoii} in real time and develop both the online \gls{waoii} and the fairness-enhanced \gls{fwaoii} polling policies without relying on prior knowledge of transition probabilities.
The main contributions of this paper are as follows:

\begin{enumerate}

\item We propose an efficient online Whittle index-based polling policy that leverages state estimates to minimise long-term \gls{aoii}. Notably, our approach operates without requiring any prior knowledge of the underlying system dynamics as detailed in Section \ref{section:edge_mining}.
\item We present rigorous evaluation of the online \gls{waoii} policy on synthetic and actual sensor data from real-world deployments, comparing its performance across key metrics such as polling distribution, energy consumption, and \gls{rmse} against state-of-the-art techniques as detailed in Section \ref{section:experimental_evaluation}.
\item We introduce a fairness constraint into the Whittle index policy, resulting in the \gls{fwaoii} policy, which enhances fairness in transmission opportunities among nodes while minimising the \gls{aoii}. 
This approach addresses the fairness limitations of traditional Whittle index policies, where some arms may remain unpolled as detailed in Section \ref{Section:The_Whittle_index}.
\item Although the Whittle index is known to provide asymptotically optimal results, it is only applicable to problems that satisfy the indexability condition. We prove that the \gls{aoii} for pulling is indexable and admits a threshold-based policy. We also derive the solution for determining the optimal policy \ref{appendix:proof_indexability}.

\item When the optimal penalty threshold is known, the problem simplifies significantly, enabling direct computation of the optimal policy. We further show how this threshold can be dynamically learned in the absence of prior knowledge Algorithm (\ref{alg:DynamicPenaltyUpdate}).
\end{enumerate}
A preliminary version of this work was presented in \cite{jonah2025blackseacom}, which included contributions 1 and 2. In this paper, we extend the previous work by additionally including contributions 3 to 5.

\section{Background and Related Work}
\label{section:background}

\subsection{Intelligent Polling for Resource-Constrained Systems}

Efficient polling mechanisms are essential in \gls{wsns} and \gls{iot} systems, particularly those deployed in energy-constrained environments. While our proposed method is general and applicable to any polling-based system, we use \gls{wur} as a practical case study to demonstrate its effectiveness in reducing transmission overhead and improving information quality. The techniques presented here, however, extend beyond \gls{wur} and are broadly relevant to a wide range of low-power communication systems.

In traditional \gls{wsns}, energy conservation is often achieved through duty cycling, where nodes alternate between active and sleep states. While this reduces idle listening, it introduces latency and delays in data collection~\cite{oller2015has}. Additionally, nodes commonly rely on random access protocols for channel contention, which in dense networks leads to increased collisions, retransmissions, and overall energy consumption.
To address these inefficiencies, the concept of \gls{wur} has gained traction. \gls{wur} introduces an ultra-low power receiver that continuously listens for a \gls{wus}, while keeping the \gls{pcr} in deep sleep~\cite{kozlowski2019energy,deng2020ieee}. When a valid \gls{wus} is detected, the \gls{pcr} is activated to transmit data, thereby reducing idle listening and unnecessary channel contention. This approach significantly lowers the energy cost of monitoring, as the receiver typically operates in the microampere range, achieving up to a 1,000-fold reduction in listening energy~\cite{oller2015has} the \gls{wur} process is described in Algorithm \ref{alg:wake_up_radio_polling}.

Despite these benefits, the energy consumed during transmission remains a critical concern. In particular, transmitting packets that offer little or no informational value to the sink still incurs high energy costs~\cite{yomo2015radio}. This inefficiency motivates a shift toward goal-oriented communication, where updates are transmitted only when they contribute meaningfully to the sink’s understanding of the monitored process~\cite{zhuo2023value,strinati20216g,li2024toward,feng2024goal,getu2023making,talli2024push}.
Several polling strategies have been proposed to improve information relevance. Junya~\textit{et al.}~\cite{shiraishi2020content} introduced a top-\(k\) query-based wake-up scheme for \gls{wur}, where the sink polls nodes with the highest data observations. This method, adapted from the exact top-\(k\) query framework~\cite{malhotra2010exact}, reduces energy consumption by avoiding polling nodes with insignificant updates. However, it assumes that high-magnitude values correspond to high relevance, which is not always the case. For instance, a moderate increase in refrigerator temperature may be more critical than a typical fluctuation in room temperature, even if the latter has a higher numeric value. In such cases, top-\(k\) polling can lead to suboptimal decisions.

These limitations have led to growing interest in information-aware polling strategies that consider the \emph{value} rather than the magnitude of updates. Our work builds on this direction by proposing a general framework based on the \gls{aoii} metric and Whittle index theory, which can be applied across a wide range of resource-constrained systems. While we demonstrate its effectiveness using \gls{wur} as an application case, the proposed method is system-agnostic and suitable for any scenario that requires energy-efficient, intelligent polling.

\begin{algorithm}[ht]
\caption{WUR Polling with Targeted Wake-Up IDs}
\label{alg:wake_up_radio_polling}

\KwIn{Set of sensor nodes $\mathcal{N} = \{n_1, n_2, \ldots, n_N\}$, wake-up IDs $\{ID_1, ID_2, \ldots, ID_N\}$, polling interval $\Delta t$, retry limit $R_{\text{max}}$}
\KwOut{Wake up specific nodes to transmit updates to the sink}

\For{each polling round}{
    Select a node $n_k$ to poll based on the scheduling policy\;
    Sink sends a targeted wake-up signal with $ID_k$ to wake up node $n_k$\;
    Wait for acknowledgment of wake-up from $n_k$\;

    \eIf{wake-up acknowledgment received}{
        Node $n_k$ activates its Primary Communication Radio (PCR) and transmits data\;
        $r \leftarrow 0$\;
        \While{no ACK received from the sink \textbf{and} $r < R_{\text{max}}$}{
            Retry transmission\;
            $r \leftarrow r + 1$\;
        }
        \eIf{ACK received from the sink}{
            Mark transmission as successful\;
        }{
            Mark transmission as failed; deactivate PCR\;
        }
    }{
        Mark wake-up as failed for $n_k$\;
    }
    Wait for $\Delta t$ before the next polling round\;
}
\end{algorithm}

\subsection{aoii Penalty Function}

The \gls{aoii} metric can be regarded as an evolution of the \gls{aoi} concept, initially introduced by Kaul et al.~\cite{kaul2012real} to quantify the freshness of real-time status updates. Since its inception, \gls{aoi} has gained significant traction as a key performance metric in networked monitoring systems. Specifically, \gls{aoi} captures the time elapsed since the most recent successful update was received at the monitor, increasing linearly with time until a new update arrives~\cite{liu2022wireless,jin2022deep,li2020age}. Formally, it is defined as:
\begin{equation}
    \delta \text{AoI}(t) = t - g(t)
\end{equation}
where \(g(t)\) denotes the time at which the last successful update was received at the monitor.
Although \gls{aoi} promotes timely updates, it does not consider whether the underlying state of the observed process has changed. Consequently, updates may be transmitted even when they convey little or no new information, leading to potential inefficiencies in communication~\cite{maatouk2020age}.

Error-based metrics, on the other hand, apply a constant penalty whenever there is a discrepancy between the true state and the estimated state at the monitor, regardless of how long the error persists. These approaches typically focus on minimising the mean squared error \gls{mse}~\cite{ayik2023optimization, chen2021scheduling}. However, they lack temporal sensitivity and fail to prioritise updates based on their urgency or relevance.

To address the limitations of both \gls{aoi} and error-based penalties, the \gls{aoii} penalty function was introduced in~\cite{maatouk2022age}. \gls{aoii} captures the notion of providing fresh and meaningful updates by penalising the system only when the current state of the process deviates from the last reported state. The penalty increases with the degree and duration of this deviation. 
Minimising the penalty \gls{aoii} ensures that updates are not only timely but also relevant prioritising the delivery of state changes that matter. This makes AoII a more robust metric for applications where update value, rather than frequency alone, is essential. The \gls{aoii} penalty function is defined as:

\begin{equation}
     {\text{AoII}}(t) = f(t) \times g(x(t), \hat{x}(t))
\end{equation}
\begin{itemize}
    \item \(f(t)\): A time-dependent penalty function that grows with time and
    represents the cost of not being aware of the correct status of the process at the last updated time. 
    \item \( g(x(t), \hat{x}(t)) \): A function that represents the error-based penalty (e.g, indicator, squared error, threshold error) details can be found in \cite{maatouk2022age}. 
    the error is the difference between the actual state of the process \(x(t)\) and the current estimate \(\hat{x}(t)\) at the monitor.
\end{itemize}
Although \gls{aoii} incorporates both error and penalty, many previous works\cite{ayik2023optimization,maatouk2022age} primarily rely on the indicator or threshold function, which returns a binary value (1 or 0), leading to a uniform penalty. 
However, in most systems, such as fire alarms or industrial process monitoring, the higher the difference between the actual and estimated states, the higher the penalty should be. 
For example, in temperature monitoring, a larger deviation between measured and estimated values should incur a higher penalty. 
This limitation led to the development of the distance-based \gls{aoii} by Kriouile \textit{et al}.~\cite{kriouile2023pull}, where the penalty grows in proportion to the magnitude of the state difference, providing a more accurate representation of the system’s performance. The distance-based \gls{aoii} can be expressed as:
\begin{equation}
\delta \text{AoII}(t) = \mathbb{E} \sum_{u}^{t} \left| x(u) - \hat{x}(t) \right|
\label{eqn:change_in_aoii}
\end{equation}
where \( x(u) \) represents the state value at the time of the last successful update at the sink, and \( \hat{x}(t) \) denotes the current estimate at time \( t \).

Recent studies on polling-based systems using \gls{aoii}~\cite{kriouile2023pull, ayik2023optimization,kriouile2026minimizing,chen2024minimizing} and error-based metrics~\cite{deshpande2023energy} have largely focused on theoretical models and simulations, most of which make the following assumptions:
\begin{itemize}
    \item The state space of the information content is small and known.
    \item When the content of the information changes, there is a high probability that it will revert to its previous state in the near future.
\end{itemize}
As shown in \cite{liu20222}, real-world data collection scenarios often involve extremely large state spaces, making it unlikely for the system to revisit a previously observed state in the near future. Furthermore, most existing approaches remain largely theoretical, offering little guidance on how the \gls{aoii} can be practically estimated without relying on known state transition probabilities.
Previous works in \cite{ayik2023optimization,maatouk2022age,yates2021age} assume that the sink has full knowledge of the process being monitored at the node side. In contrast, we make no such assumption, thereby addressing more realistic scenarios where this knowledge is not available. Our approach performs state estimation at the sink without relying on prior knowledge of the node-side process.
Furthermore, to the best of our knowledge, there has been no practical implementation of state estimation at the sink under such constraints. Most existing works \cite{deshpande2023energy, saurav2023scheduling, peng2022communication} assume that the sink has access to the underlying process dynamics, which may not be feasible in many real-world applications.
Liang \textit{et al.}~\cite{liang2024bayesian} demonstrated that assuming a known probability distribution leads to suboptimal performance when the assumed steady-state probabilities differ from those in realistic scenarios, particularly when the transition dynamics evolve over time. Such assumptions result in inefficient solutions. 
This has prompted recent research into developing online learning policies for \gls{rmab} problems~\cite{liang2024bayesian,wang2023optimistic,  wang2020restless}.

In this work, we adopt a state estimation technique for practical distance-based \gls{aoii}, as detailed in Section \ref{section:edge_mining}, which is subsequently used to formulate our online Whittle index policy. 
Without assuming known state transition probabilities or steady-state process dynamics of the observed process at the sink, we propose a fairness-aware extension to the \gls{aoii} Whittle index policy.
We validate our approach using both real-world and synthetic datasets across a range of simulation scenarios.

\begin{table*}
\centering
\caption{Summary of Notations Used in the Paper}
\label{tab:notation_summary}
\renewcommand{\arraystretch}{1.2}
\setlength{\tabcolsep}{5pt}
\begin{tabular}{l l}
\hline
\textbf{Notation} & \textbf{Description} \\
\hline

\multicolumn{2}{l}{\textit{System and RMAB Parameters}} \\
\hline
$N, M, t, i$ & Number of nodes, number of selected nodes per step, time index, node index \\
$S, s(t)$ & State space and system (joint) state at time $t$ \\
\\

\multicolumn{2}{l}{\textit{State and Estimation}} \\
\hline
$x_i(t), \hat{x}_i(t)$ & True state and estimated state of node $i$ \\
$\mathbf{x}(t) = (x_1(t), x_2(t))^T$ & State vector (value and rate of change) \\
$x_1(t), x_2(t), z(t)$ & Process value, rate of change, and sensor measurement \\
$u, \Delta t$ & Last update time and sampling interval \\
\\

\multicolumn{2}{l}{\textit{Information Freshness Metrics}} \\
\hline
$\delta \mathrm{AoI}(t), \delta \mathrm{AoII}(t)$ & Age of Information and Age of Incorrect Information \\
\\

\multicolumn{2}{l}{\textit{Communication and Actions}} \\
\hline
$a_i(t), \mathbf{a}(t)$ & Node action and joint action vector \\
$\rho, \hat{\rho}$ & Packet Delivery Ratio and its estimate \\
\\

\multicolumn{2}{l}{\textit{Control and Optimisation}} \\
\hline
$C(s(t),a(t)), R(s(t),a(t))$ & Cost and reward functions \\
$\lambda$ & Lagrangian activation (polling) cost \\
$W_i(s_i)$ & Whittle index of node $i$ \\
$S_t^\star, S_t^F$ & Optimal and fairness-aware scheduling sets \\
$\eta, \mathcal{F}_t$ & Fairness window and violating node set \\
$Q_i(s_i,a_i), V(s)$ & Q-value and value function \\
\\

\multicolumn{2}{l}{\textit{State-Space Model and Noise}} \\
\hline
$A, H, v(t), w(t)$ & State transition matrix, observation matrix, process noise, measurement noise \\
\\

\multicolumn{2}{l}{\textit{Estimation Parameters}} \\
\hline
$\beta_1, \beta_2, \beta_3$ & Smoothing parameters for estimation and PDR update \\
\\

\multicolumn{2}{l}{\textit{Energy Model}} \\
\hline
$E_{\max}, E_t, E_s, E_w, E_0$ & Maximum energy, transmission, sensing, wake-up, and sleep energy consumption \\
\\

\hline
\end{tabular}
\end{table*}

\section{Methodology}
\label{section:method}
\subsection{System model }
\begin{figure}
    \centering
    \includegraphics[width=\linewidth]{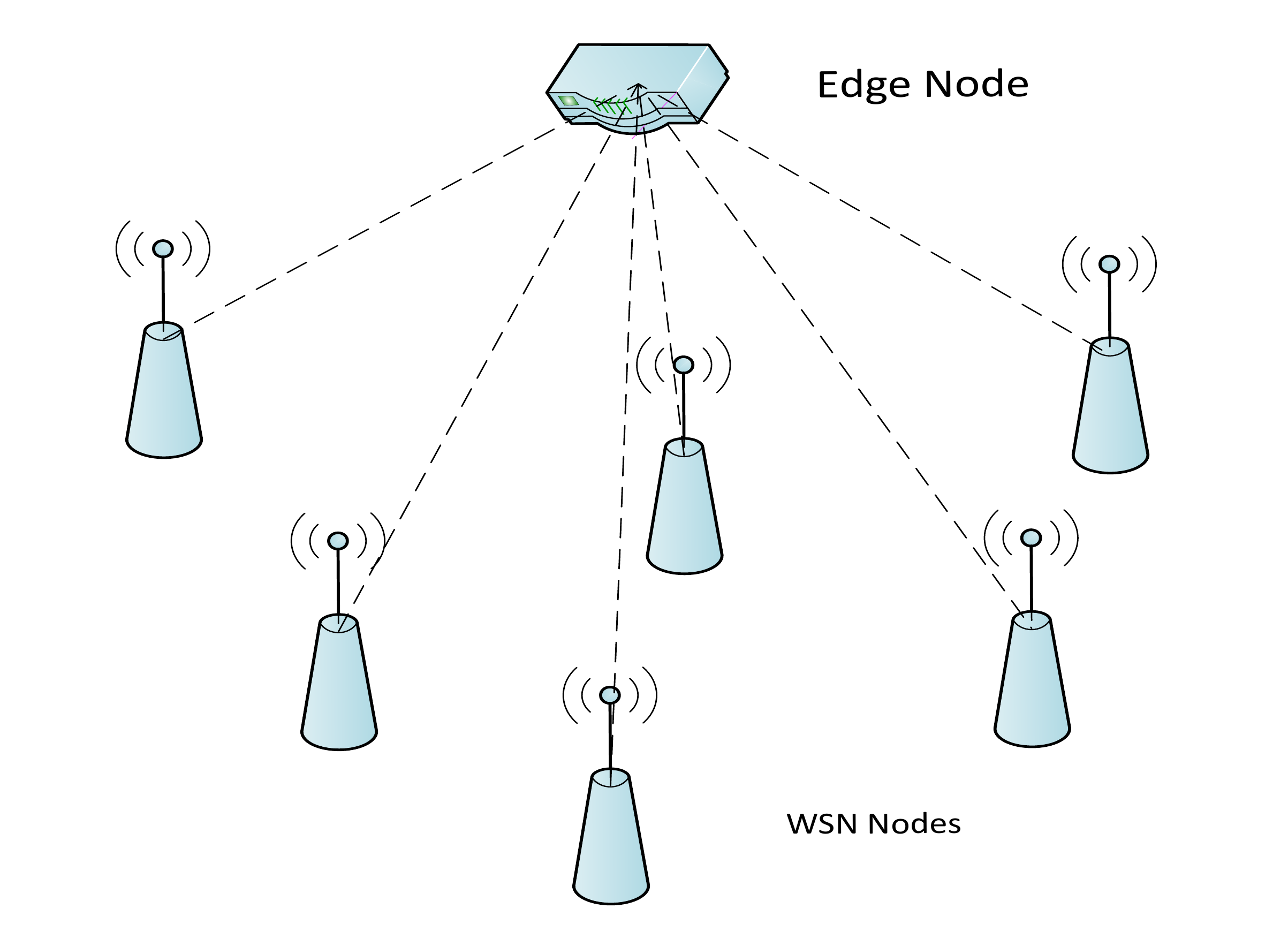}
    \caption{\gls{wsns} with wake-up radio, where the edge node can poll only \(M\) out of \(N\) nodes at each time step.}

    \label{fig:system_model}
\end{figure}
In this work, we consider a scenario in which \(N\) sensor nodes are deployed within a sensing region, such as a smart home, an industrial facility, or an agricultural field. Each sensor node is equipped with a \gls{wur} that continuously listens for \gls{wus} from the sink. Upon receiving a \gls{wus}, the \gls{wur} activates the \gls{pcr}, which then transmits data to the sink over an unreliable channel.
If the transmission fails, a retransmission is initiated. 
If the data is successfully received, an \gls{ack} is sent to the node, and the \gls{pcr} is turned off. Each node measures various processes of interest, \(x_{i}(t)\), which evolve over time. When polled, the node transmits its data to the sink.

Since the sink does not have an accurate knowledge of the monitored process's state, a prediction and estimation technique known as edge mining is used to help the sink construct an estimate of the process, \(\hat{x}_{i}(t)\), based on the data received from the nodes. Further details are provided in Section~\ref{section:edge_mining}.
In this setup, the sink needs to know the status of all nodes. However, due to channel constraints, the sink can only select a subset \( M \) from the \( N \) nodes as illustrated in Fig~\ref{fig:system_model}. Consequently, the sink attempts to poll the \( M \) nodes that will provide the most relevant updates for its state estimation.
The importance of each update is estimated using the \gls{aoii}, which reflects how outdated and inaccurate the last information received at the sink has become.

\subsection{Edge mining technique for state estimation}
\label{section:edge_mining}
The edge mining technique involves directly converting or transforming raw sensory data at the point of collection (the node) into a meaningful format with relevant conceptual information. Central to this process is state estimation, which converts raw sensor data into a meaningful form suitable for the application’s context or the specific goals of the sensing system.
State estimation simplifies event detection by enabling thresholds to be used for monitoring changes in sensor data observations. 
For example, a threshold can be set to detect state transitions, such as identifying a temperature shift from hot to cold. Similarly, thresholds applied to accelerometer data can detect activities like sitting, standing, or walking.

\gls{lsip} is an edge mining technique applied in scenarios where it is desirable to reconstruct the original signal within a specified error bound \cite{gaura2013edge}. 
In \gls{lsip}, nodes only transmit data when the observed parameter deviates significantly enough that the sink can no longer accurately estimate it based on previously received data.
By only transmitting data when there is significant deviation, \gls{lsip} reduces the data transmitted by the node. Similarly, this technique can be applied to directly estimate the \gls{aoii}. 
By polling nodes only when the \gls{aoi} reaches a specified threshold, we can minimise the number of polling requests and, more importantly, determine which nodes to poll at a given time based on how outdated the information at the sink is, as indicated by the \gls{aoii}.

\subsection{Estimation of aoii}

Let us assume \( x(t) \) represents a linear dynamic process:

\begin{equation}
x(t) = A x(t - \Delta t) + v(t)
\end{equation}
\noindent
where \( A \) is the update coefficient for the system, and \( v(t) \) represents the system noise.
The actual measurement model is given by:
\begin{equation}
z(t) = H x(t) + w(t)
\label{eq:state_estimation}
\end{equation}
\noindent
where \( H \) is the observation coefficient that maps the state \( x(t) \) to the measurement \( z(t) \), and \( w(t) \) represents the measurement noise.
To apply the \gls{lsip} technique, we encode the process as a state vector consisting of the current value and the rate of change, represented as \( \mathbf{x} = (x_{1}, x_{2})^{T} \). Using the edge mining technique \gls{lsip} \cite{gaura2013edge}, the sink can estimate the state of the node as \( \hat{\mathbf{x}} = (\hat{x}_{1}, \hat{x}_{2})^{T} \). The edge mining technique is model-agnostic, thus various techniques such as \gls{kf}, normalised least square average or other estimation techniques can be used~\cite{gaura2013edge}.


\begin{algorithm}
\caption{State Estimation from Sink}
\label{alg:sink_estimate}
 \textbf{Update state value} $x_1(t)$:
\[
x_1(t) \leftarrow \beta_1 z(t) + (1 - \beta_1) \left( x_1(t - \Delta t) + x_2(t - \Delta t) \Delta t \right)
\]
 \textbf{Update rate of change} $x_2(t)$:
\[
x_2(t) \leftarrow \beta_2 \frac{x_1(t) - x_1(t - \Delta t)}{\Delta t} + (1 - \beta_2) x_2(t - \Delta t)
\]
 \textbf{Transmit state vector} $\mathbf{x} = (x_1(t), x_2(t))^T$ along with packet information and timestamp $u = g(t)$
 
 \textbf{Sink estimates node state using linear interpolation}:
\[
\begin{bmatrix}
\hat{x_1}(t) \\
\hat{x_2}(t)
\end{bmatrix}
\leftarrow \begin{bmatrix}
1 & t - u \\
0 & 1
\end{bmatrix}
\begin{bmatrix}
x_1(u) \\
x_2(u)
\end{bmatrix}
\]

\end{algorithm}

In the \gls{lsip} technique, the sink operates using state estimates received from the nodes, denoted by \( x_1(u) \) and \( x_2(u) \). Based on these state representations, the sink predicts which node currently has the highest \gls{aoii} by applying linear interpolation, accounting for the time elapsed since the last update.

The parameters \(\beta_1, \beta_2 \in (0, 1)\) in Algorithm~\ref{alg:sink_estimate} are smoothing factors that regulate the influence of current and historical observations, while \( \Delta t \) represents the time interval between consecutive samples. This formulation enables stable and adaptive estimation without requiring explicit knowledge of the underlying system dynamics.

Importantly, this approach does not rely on prior knowledge of transition probabilities or an explicit system model. The only assumption is that the underlying process can be approximated as linear over short time intervals. This assumption is widely valid in many practical sensing applications, including temperature, humidity, and acceleration monitoring, where system dynamics exhibit near-linear behaviour over typical sampling horizons. Such approaches have been successfully applied in real-world sensing systems \cite{gaura2013edge,trihinas2017admin}.

Furthermore, many nonlinear processes encountered in sensing applications can be effectively linearised around an operating point \cite{yuan2017weighted,islam2019linearization}. The proposed framework is model-agnostic, as demonstrated in \cite{gaura2013edge}, provided that a consistent approximation scheme is applied at both the remote monitor and the node devices. As such, the approach can be readily extended to nonlinear systems through standard linearisation techniques. In these settings, \gls{lsip} continues to provide reliable state estimates, particularly in real-time sensing environments where sampling intervals are short and system dynamics can be locally approximated as linear.

\subsection{Evolution of aoii}
Given the process estimates of the nodes at the sink from Algorithm~\ref{alg:sink_estimate}, we can directly obtain the actual values for~(\ref{eqn:change_in_aoii}) as
\begin{equation}
\delta \text{AoII}(t) = (t - u) \left| x_{2}(u) \right|.
\label{eqn:practical_aoii}
\end{equation}
\noindent

From~(\ref{eqn:practical_aoii}), we observe that the \gls{aoii} depends on both the \gls{aoi} and the rate of change of the underlying process. 
Consequently, if the rate of change of the system, \( x_{2}(u) \), is slow during a certain time period, the update at the sink remains relevant for a longer time before becoming outdated, and vice versa.

A higher \( \delta \gls{aoii} \) indicates that the estimate at the sink is more outdated since the last update, making it more valuable to poll this node to obtain a fresher update. 
If the node is successfully polled at the next time step, the sink receives the current estimate, and the \gls{aoii} resets to zero. 
However, due to unreliable link quality, polling a node does not necessarily guarantee a successful update.
The sink also estimates the link quality using the \gls{pdr}, defined as the ratio of successful packet \gls{ack} received to the total number of times the node has been polled within a specified time window, denoted by \( \rho \) \cite{jonah2024adaptive,ansar2017adaptive}. The estimated \gls{pdr} at the sink is updated as follows:
\begin{equation}
    \hat{\rho} = \beta_{3} \rho + (1 - \beta_{3}) \hat{\rho}
    \label{eqn:pdr_estimate}
\end{equation}
where \( \beta_{3} \) \( \in [0, 1] \) is a weighting parameter that adjusts the influence of recent versus older packet updates on the link quality estimation. The state vector \( S \) at the sink for each node is represented as \( (x_{2}(u), u, \rho) \), as these parameters govern the evolution of the \gls{aoii}.

\noindent
\textbf{If the node is not polled} the \gls{aoii} evolves as: 
\begin{equation}
    \delta \text{AoII}(t+1) = (t+1 - u) x_2(u). 
    \label{eqn:aoii_no_polling}
\end{equation}
      
\noindent
\textbf{If the node is polled}, the \gls{aoii} either resets or increases depending on the success or failure of the transmission. The evolution in the next time step is expressed as:

    \begin{equation}
    \delta \text{AoII}(t+1) =
    \begin{cases}
    0, & \text{with prob } \hat{\rho} \\
    (t+1 - u)  x_2(u), & \text{else}
    \end{cases}
    \label{eqn:evolution_of_aoii}
\end{equation}
\noindent
As the state estimate at the sink evolves, regardless of whether a polling action occurs, the polling of nodes can be formulated as a special case of multi-armed bandit techniques, specifically \gls{rmab}, as detailed in Section~\ref{section:section_rmab}.

\subsection{The restless multi-armed bandit problem }
\label{section:section_rmab}
The \gls{rmab} is a generalisation of the \gls{mab} that introduces states for each arm. Unlike stochastic \gls{mab}s, where the reward distribution is static, the reward distribution for each arm in \gls{rmab}s depends on its current state. The term “restless” indicates that each arm’s state evolves over time, regardless of whether it is selected (“polled”) or not. 
Specifically, the state of each arm (representing the node estimate at the sink) evolves differently depending on whether the node is polled, with state transitions governed by one \gls{mdp} when an action is taken and a different \gls{mdp} when no action is taken.
The \gls{rmab} agent must repeatedly select \( M \) out of \( N \) arms (\( M < N \)) at each timestep to maximise the objective function. The total reward at each timestep is the sum of the rewards from all arms, including those not selected. Due to the combinatorial action space and state-dependent rewards, finding the optimal policy for \glspl{rmab} is PSPACE-hard, even in systems with known transition probabilities \cite{wang2020restless,papadimitriou1994complexity}. 
The \gls{rmab} framework has been widely applied to resource allocation problems across various domains, including clinical trials \cite{wang2023optimistic,liang2024bayesian} and wireless networks \cite{ayik2023optimization,mehta2018rested}.
Detailed review of \gls{rmab} can be found in \cite{nino2023markovian}.

Online \gls{rl} methods have been widely adopted for \gls{rmab} problems in settings where the transition dynamics are unknown. These approaches typically rely on learning mechanisms that quantify uncertainty in the system dynamics, after which a Whittle index policy is applied for scheduling decisions. Common techniques include Upper Confidence Bound (UCB)-based exploration strategies \cite{wang2023optimistic,jonah2026adaptive} and Thompson sampling-based approaches \cite{akbarzadeh2023learning,liang2024bayesian}. However, many of these methods assume that a prior distribution over the system parameters is known or can be reasonably specified.

In addition, several works have explored Q-learning-based approaches for \gls{rmab} problems under the average reward criterion \cite{xiong2023finite,biswas2021learn}. Unlike discounted reward formulations, the average reward setting focuses on long-term steady-state performance, which aligns closely with objectives such as minimising the average \gls{aoii}. Despite this, existing approaches are typically designed for scenarios where the system state is directly observable or can be represented with a well-defined state space.

In practical sensing systems, however, the underlying state often needs to be inferred or estimated from observations, introducing additional uncertainty in both state representation and transition dynamics. This makes the direct application of existing \gls{rl}-based \gls{rmab} techniques challenging, particularly in problems where metrics such as \gls{aoii} depend on estimated system states rather than fully observable processes.

Given that the \gls{aoii} from (\ref{eqn:evolution_of_aoii}) evolves even when a node is not polled, and considering the constraint on the number of nodes that can be polled, the optimal polling solution to minimise the \gls{aoii} the problem is a \gls{rmab} problem.
Solving the \gls{rmab} problem typically requires prior knowledge of transition probabilities, as demonstrated in previous work on \gls{aoii} for status updates in wireless networks \cite{ayik2023optimization,kriouile2023pull}. In this work, however, we directly estimate the evolution of the \gls{aoii} using the edge mining technique, as detailed in Algorithm~\ref{alg:sink_estimate}.
To formally define the \gls{rmab} problem, we introduce the \gls{rmab} tuple \( (S, A, C, P) \), where each component is defined as follows:
\begin{itemize}
    \item \textbf{State} (\(S\)): The state \( s \in S \) represents the vector state of each node at the sink at time \( t \).
    
    \item \textbf{Action} (\(A\)): The action \( a \in A \) is a binary decision to poll or not poll a given arm. Specifically, \( a_i(s) = 1 \) if node \( i \) is polled at time \( t \), and \( a_i(s) = 0 \) otherwise.
    
    \item \textbf{Cost} (\(C\)): The cost is modeled as a penalty function based on the \gls{aoii}, represented as \( C(s(t)) = (C_1(s(t)), \dots, C_N(s(t))) \), where \( C_i(s(t)) \) is the penalty for the \gls{aoii} of node \( i \) at time \( t \), depending on  the current state \( s(t) \).
    
    \item \textbf{State Transition} (\(P\)): The exact transition probabilities are unknown, but the next state can be estimated based on the rate of change of the system and the link quality estimate \( \hat{\rho}\) as given in~(\ref{eqn:pdr_estimate}).
\end{itemize}

Our objective is to find a scheduling policy \( \pi \) that minimises the average cost subject to the constraints of available channels \( M \leq N\).
Additionally, to ensure fairness, we require that every node is polled at least once within a given time window \(\eta\).
We now proceed to formulate the optimal scheduling problem.
\begin{equation}
\begin{aligned}
\min_{\pi} \quad & \lim_{T \to \infty} \frac{1}{T} \mathbb{E} \left[ \sum_{t=0}^{T-1} \sum_{i=1}^{N} C_i(s(t), a_i(t)) \right] \\
\text{s.t.} \quad & \sum_{i=1}^{N} a_i(t) \leq M \quad \text{for all } t, \\
                  & \sum_{t=k}^{k+\eta-1} a_i(t) \geq 1 \quad \text{for all } i \text{ and } k.
\end{aligned}
\end{equation}

\noindent
The optimal solution to this problem is known to be intractable \cite{papadimitriou1994complexity}. 
Thus, we relax the constraints to use the Lagrangian as the evaluation metric, as used in related work by Saad and Mohamad~\cite{kriouile2023pull}. 
Thus,
\begin{equation}
\min_{\pi} \frac{1}{T} \mathbb{E} \left[ \sum_{t=0}^{T-1} \sum_{i=1}^{N}  C_i(s(t)) + \lambda \, a_i(s(t)) \right] - \lambda M T
\end{equation}
where \( \lambda \geq 0 \) is the penalty function for scheduling a node, which can be interpreted as adding a penalty \( \lambda \) to the polling action \( a = 1 \). 
Since the term \( \lambda M \) is independent of the policy \( \pi \), it can be eliminated to simplify the analysis.

This relaxation allows us to take actions that do not strictly satisfy the optimal scheduling policy. 
By relaxing the constraint, we can decompose the combinatorial policy into a set of \( N \) independent policies for each node (arm), as demonstrated in \cite{wang2023optimistic,kriouile2024asymptotically}. Consequently, we can drop the index term, reducing the problem to,
\begin{equation}
\min_{\pi} \frac{1}{T} \mathbb{E} \left[ \sum_{t=0}^{T-1} \left( C(s(t)) + \lambda \, a(s(t)) \right) \right]
\end{equation}

\subsection{The Whittle index }
 \label{Section:The_Whittle_index}
The Whittle index, developed by Whittle~\cite{whittle1988restless}, has been widely used in various \gls{rmab} problems to determine the optimal scheduling policy when the \gls{rmab} satisfies the indexability property.

\noindent
\textbf{Definition 1 (Indexability)}  
\textit{An arm is said to be indexable if its passive set under the penalty \( \lambda \) expands monotonically from \( \emptyset \) to the entire state space as \( \lambda \) increases from \( -\infty \) to \( +\infty \). 
Intuitively, this means that if an arm selects a passive action at penalty \( \lambda \) for a given state, it will also select a passive action at any higher penalty \( \lambda_h > \lambda \).}

Consider the value function \( V^{\lambda}(s) \), representing the cost associated with being in state \( s \) when a global penalty \( \lambda \) applies to the active action:
\begin{equation}
    \label{eqn:value_policy}
    V^{\lambda}(s) = C(s) + \lambda,
\end{equation}
\noindent
As \( \lambda \) increases, the penalty makes the active action progressively less favorable, encouraging the passive choice regardless of possible benefits in the next state. Since \( V^{\lambda}(s) \) increases continuously with \( \lambda \) for any given state \( s \), the preference for the passive action also grows, causing the passive set to expand monotonically as \( \lambda \) increases. 
Therefore, if an arm chooses a passive action at a given penalty \( \lambda \), it will also choose a passive action for any higher penalty \( \lambda_h > \lambda \). This behavior satisfies the condition for indexability.
\begin{IEEEproof}
 see Appendix~\ref{appendix:proof_indexability}.
\end{IEEEproof}
\noindent
Since the problem is indexable, the Whittle index policy can be applied to achieve optimal scheduling.
\begin{equation}
W_i(  \hat{\rho}, s_i) = \min \left\{ c : Q_{c}(s_i, 0) = Q_{c}(s_i, 1) \right\}
\label{eqn:whittle_eqn}
\end{equation}
where \( Q_{c}(s, a) \) and \( V_{c}(s) \) are the solutions to the Bellman equation with penalty \( c \) for taking action \( a = 1 \):
\[
Q_{c_i}(s, a) = c a + C(s, a) +  \sum_{s' \in S} P_i(s, a, s') V_{c}(s')
\]
\[
V_{c}(s) = \min_{a \in A} Q_{c}(s, a)
\]
\noindent
The polling action becomes optimal when the Whittle index \( W_i(\hat{\rho}, s_i) \) exceeds the global penalty \( \lambda \), implying that \( c > \lambda \). 
Given this global penalty, the optimal solution to (\ref{eqn:whittle_eqn}) can be obtained by solving the Bellman optimality equation.
\begin{IEEEproof}
 see Appendix~\ref{appendix:proof_threshold}.   
\end{IEEEproof}
However, if the globally optimal penalty is not known, an asymptotically optimal penalty can be determined using Algorithm~\ref{alg:DynamicPenaltyUpdate}.

\begin{algorithm}[ht]
\caption{Dynamic Penalty Update}
\label{alg:DynamicPenaltyUpdate}

\KwIn{Number of nodes $N$, maximum number of nodes to poll $M$}
\KwOut{Updated penalty $\lambda^{(t+1)}$}

Initialise penalty $\lambda^{(0)} \gets 0$\;

\For{each time slot $t \in \mathbb{N}^+$}{
    Compute thresholds $c_i$ for each node $i$\;

    Identify the set $\mathcal{E}$ where $c_i > \lambda^{(t)}$\;
    $\mathcal{E} \gets \{ i \mid c > \lambda^{(t)}, \forall i \in \{1, \ldots, N\} \}$\;

    \eIf{$|\mathcal{E}| \leq M$}{
        $\lambda^{(t+1)} \gets \lambda^{(t)}$\tcp*[r]{No penalty update needed}
    }{
        Sort nodes in $\mathcal{E}$ by $c_i$ in descending order\;
        Identify the $M$-th node $i_M$ in the sorted list\;
        Update penalty $\lambda^{(t+1)} \gets c_{M}$\;
    }

    Use $\lambda^{(t+1)}$ in the polling policy\;
}
\end{algorithm}

\textbf{Definition 2 (Threshold Policy)}  
\textit{A policy is considered a threshold policy if there exists a threshold \( c \) such that the action is passive when \( c < \lambda \) and active when \( c \geq \lambda \).
Thus we  formally define our Whittle index threshold policy \(W^{\lambda}_{\pi}\) based on the penalty.} 
\begin{equation}
W^{\lambda}_{\pi} (\hat{\rho},s) = \left[ \mathbbm{1} ({W_i( \hat{\rho}, s_i) \geq \lambda} ) \right] \forall {i \in N} 
\label{eqn:whittle_index}
\end{equation}
\
Thus, we can use the Whittle index to determine which arms to poll at each time slot, as outlined in Algorithm~\ref{alg:Whittle_index_polling}. 
Our optimal policy is to select the \( M \) arms with the highest index values, unless a constraint is violated for any arm. In such cases, the arm with the violated fairness constraint replaces the last arm in the top \( M \).

\begin{algorithm}[ht]
\caption{Whittle Index for AoII with Fairness}
\label{alg:Whittle_index_polling}

\KwIn{Number of nodes $N$, polling constraint $M$, penalty $\lambda$, fairness threshold $\eta$}

\For{each time slot $t \in \mathbb{N}^+$}{
    Observe AoII values $\{A_i(t)\}_{i=1}^N$ for all nodes\;
    
    Compute Whittle Indices $\{W_i(t)\}_{i=1}^N$ using penalty $\lambda$ and AoII dynamics\;
    
    Select a set $\mathcal{S}_t \subseteq \{1, \ldots, N\}$ such that $|\mathcal{S}_t| = M$ and $W_i(t) \geq W_j(t)$ for all $i \in \mathcal{S}_t$, $j \notin \mathcal{S}_t$\;
    
    \For{each $i \in \mathcal{S}_t$}{
        \If{$t - t_i^{\text{last poll}} > \eta$}{
            Replace $i$ with $j = \arg\min_{k \in \mathcal{S}_t} W_k(t)$ to ensure fairness\;
        }
    }
}
\end{algorithm}

\section{Experimental Evaluation }
\label{section:experimental_evaluation}
We evaluate the performance of our \gls{fwaoii} and \gls{waoii} techniques against several other baseline policies: the standard \gls{rr}, \gls{aoi}-based\cite{pu2023aoi,ramakanth2024monitoring}, and \gls{kf}-based policies\cite{deshpande2023energy,chen2021scheduling}. The standard \gls{rr} policy sequentially selects each node in a cyclic order based on a fixed time schedule. The \gls{aoi}-based policy schedules the node which has the most time since the last successful update was received at the sink. The  \gls{kf} based policy utilises covariance traces to determine which node currently has the highest covariance, thereby selecting that node to minimise  estimation error.

We compare the performance of the proposed technique against baseline methods across different scenarios, using both synthetic data and realistic datasets obtained from actual sensor monitoring conditions. This allows for a comprehensive and rigorous evaluation of our approach relative to existing techniques.

\subsection{Experiments with Synthetic Data} 
We begin by describing the two experimental scenarios used in this study.

\textbf{Scenario One} involves a system of 10 sensors divided into two categories. Each category monitors an environment with different levels of variability. Specifically:

\begin{itemize}
    \item \textbf{Category A:} Sensors $i \in \{1, 2, \dots, 5\}$ monitor an environment with gradual temperature changes.
    \item \textbf{Category B:} Sensors $i \in \{6, 7, \dots, 10\}$ monitor a relatively stable environment.
\end{itemize}

The temperature readings are generated using the following model:
\begin{equation}
    z_i(t) = z_{\text{mean}} + A_i \sin\left(\frac{2\pi t}{P_i}\right) + \mathcal{N}(0, \sigma_i)
    \label{eqn:temp_variation}
\end{equation}

where:
\begin{itemize}
    \item $z_i(t)$ is the temperature reading of sensor $i$ at time $t$,
    \item $A_i$ is the amplitude of variation,
    \item $P_i$ is the period of oscillation,
    \item $\mathcal{N}(0, \sigma_i)$ represents Gaussian noise with zero mean and standard deviation $\sigma_i$.
\end{itemize}

The parameters for each category are as follows:
\begin{itemize}
    \item \textbf{Category A:} $P_i = 500$, $\sigma_i = 0.1$, $A_i = 5$
    \item \textbf{Category B:} $P_i = 500$, $\sigma_i = 0.05$, $A_i = 0$
\end{itemize}

\textbf{Scenario Two} features a more complex system with 30 sensors grouped into three categories. Each category reflects a different rate of change in the monitored system, representative of industrial environments where some machines experience more variability than others.
\begin{figure}[h]
    \centering
    \includegraphics[width=\linewidth]{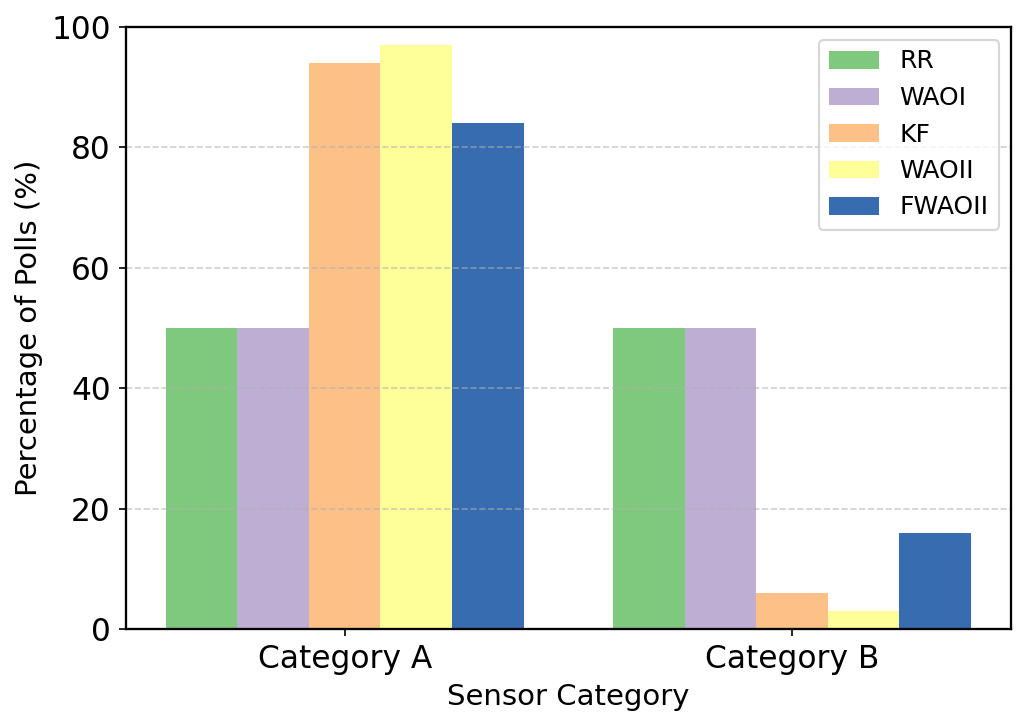}
    \caption{Polling distribution for Scenario One: \gls{rr} and \gls{aoi} techniques sampled sensors from both categories equally, even though sensors in Category B remained relatively stable throughout and offered little change. In contrast, \gls{fwaoii}, \gls{waoii}, and \gls{kf} polled sensors that changed more frequently, providing the sink with fresher updates.}
    \label{fig:pooling_distribution_two_category}
\end{figure}
\noindent

\begin{itemize}
    \item \textbf{Category A:} Sensors $i \in \{1, 2, \dots, 10\}$  slow variation, with $P_i = 1500$, $\sigma_i = 0.05$
    \item \textbf{Category B:} Sensors $i \in \{11, 12, \dots, 20\}$  moderate variation, with $P_i = 1000$, $\sigma_i = 0.05$
    \item \textbf{Category C:} Sensors $i \in \{21, 22, \dots, 30\}$ rapid variation, with $P_i = 500$, $\sigma_i = 0.05$
\end{itemize}

This setup reflects practical scenarios where sensors are deployed across both stable and dynamic environments, allowing us to assess how polling strategies adapt to differences in signal dynamics.

\begin{figure}[h]
    \centering
    \includegraphics[width=\linewidth]{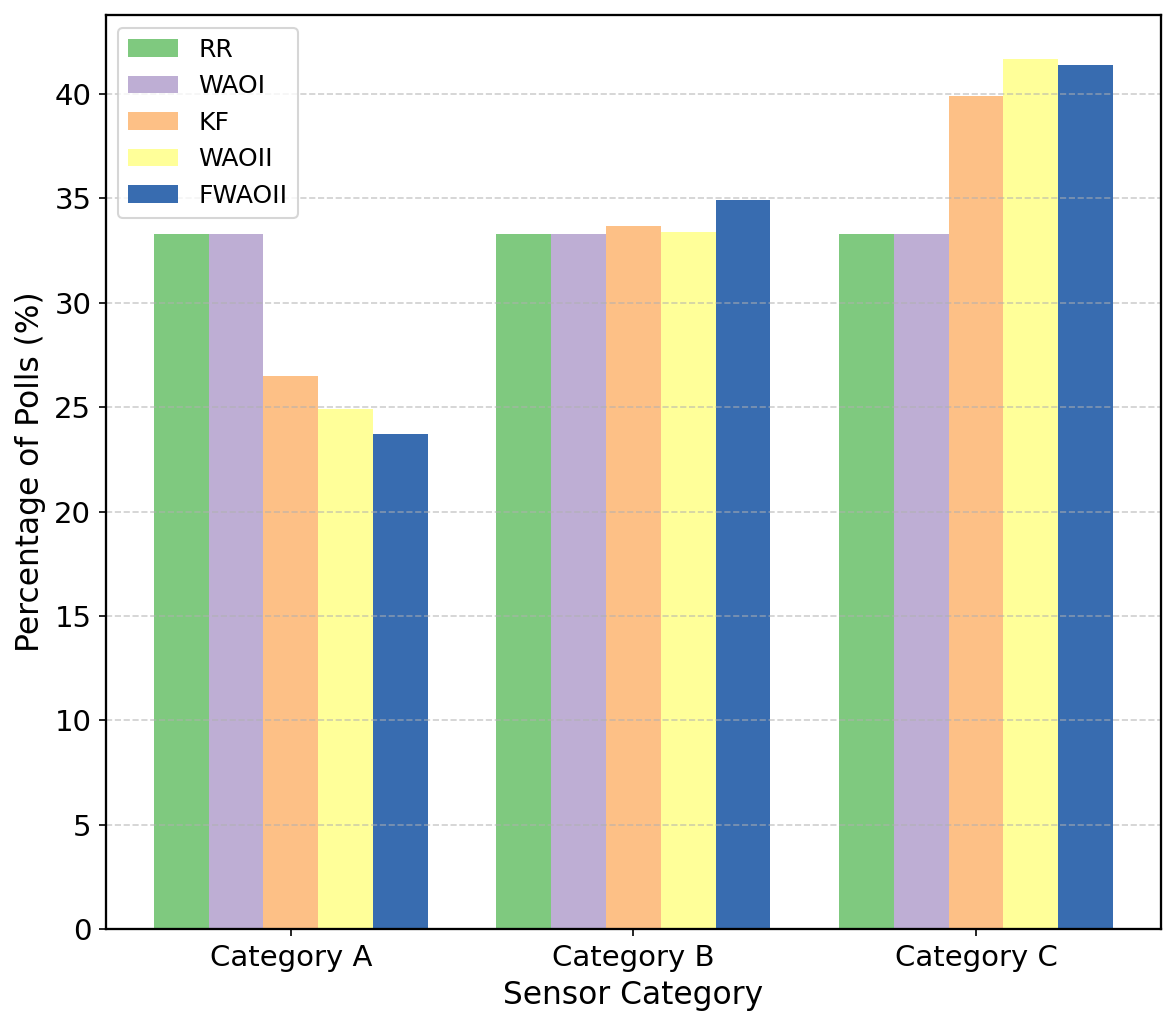}
    \caption{Polling distribution for Scenario Two: The \gls{rr} and \gls{aoi} techniques sampled sensors from all categories equally, despite differences in their rates of change. In contrast, \gls{fwaoii}, \gls{waoii}, and \gls{kf} concentrated sampling on sensors monitoring rapidly changing systems.}
\label{fig:polling_distributition_three_category}
\end{figure}

We analyse how the different techniques sample the nodes. 
In both scenarios, the \gls{rr} and \gls{waoi} based techniques sample each process uniformly, regardless of changes in the underlying system. Consequently, every sensor receives equal scheduling opportunities to transmit, as illustrated in Fig.~\ref{fig:pooling_distribution_two_category} and Fig.~\ref{fig:polling_distributition_three_category}. 
In contrast, the \gls{kf}, \gls{waoii}, and \gls{fwaoii} techniques exhibit adaptive polling, prioritising nodes where the underlying systems vary more significantly. 
This behaviour is particularly evident in scenario one, where over 90\% of polling attempts target Category A sensors with dynamic conditions, reflecting their preference for collecting more valuable updates for the sink.
The performance of the \gls{fwaoii} and \gls{waoii} techniques improves relative to other methods as the channel constraint \(M\) increases, enabling more frequent updates and better estimation accuracy.
We also analyse the impact of the polling penalty \(\lambda_{h}\). 
A higher penalty reduces polling attempts, lowering the number of transmitted packets relative to the \gls{rr} technique, as shown in Table~\ref{table:packet_transmission_penalty}. 
However, reduced sampling leads to increased estimation error. 
The optimal setting is the highest penalty that still maintains an acceptable estimation error.
\begin{table}[h]
    \centering
    \caption{\small Comparison of packet transmission (\% of RR) for varying channel constraint \(M\) .}
\label{table:packet_transmission_M}
   
    \begin{tabular}{c|ccc|cc}
        \hline
        \multirow{2}{*}{\(M\)} & \multicolumn{3}{c|}{Packet Transmission (\% of RR)} & \multicolumn{2}{c}{RMSE} \\
        \cline{2-6}
        & WAoII & KF & WAoI & WAoII & KF \\
        \hline
        \(1\) & 77.28 \% & 83.42\% & 100\% & 0.71 & 0.88 \\
        \(2\) & 40.60 \% & 51.06\% & 100\% & 0.64 & 0.63 \\
        \(5\) & 15.73 \% & 23.32 \% & 100\% & 0.53 & 0.60 \\
        \(10\) & 7.70\% & 11.71\% & 100\% & 0.52 & 0.60 \\
        \hline
    \end{tabular}
     
\end{table}
We also consider a third scenario, where the underlying process varies dynamically over time due to factors such as seasonality. In this case, we extend Scenario One by introducing a reversal in behaviour between the two sensor categories at simulation time \(t = 5000\), such that the roles of the two categories are swapped.
In this scenarios, we primarily highlight the performance of the \gls{fwaoii} and \gls{waoii} techniques. Both methods demonstrate adaptability to changing process characteristics, achieving comparable performance until the midpoint. After this time the \gls{waoii} still does not adapt to the changes as over time the rate of change of the process significantly reduces going to almost zero, while the periodic sampling from the fairness-based \gls{fwaoii} method allows for this thus sampling  all nodes after a fixed time period which enalbe this changes to be noticed and is able to adapt even after fixed behaviour reverse after long duration illustrated in Fig.~\ref{fig:fwaoii_and_waoii}. The response to such changes and also this \gls{fwaoii} also helps detect conditions were a node might stop working, this tradeoff also means that sometimes the optimal action is not taking which increases the number of sampling attempts and the lower the fairness constraints the more sampling done as shown in Table \ref{tab:polling_mse_comparison}. The \gls{fwaoii} despite the less number of packet transmission or scheduling is able to accurately reconstruct the time series of the orignal process as illustrated in Fig~ \ref{fig:spline_reconstruction_synthetic}.
\begin{table}[h]
    \centering
    \caption{\small Comparison of packet transmission (\% of RR) with varying penalty \(\lambda_{h}\) .}
    \begin{tabular}{c|ccc|cc}
        \hline
        \multirow{2}{*}{\(\lambda_{h}\)} & \multicolumn{3}{c|}{Packet Transmission (\% of RR)} & \multicolumn{2}{c}{RMSE} \\
        \cline{2-6}
        & WAoII & KF & WAoI & WAoII & KF \\
        \hline
        \(0.1\) & 18.35\% & 54.30\% & 100\% & 0.4 & 0.32 \\
        \(0.25\) & 16.67\% & 39.45\% & 100\% & 0.44 & 0.51 \\
        \(0.5\) & 15.73 \% & 100\% & 100\% & 0.53 & 0.62 \\
        \hline
    \end{tabular}
    \label{table:packet_transmission_penalty}
\end{table}

\begin{figure}
    \centering
    \includegraphics[width=\linewidth]{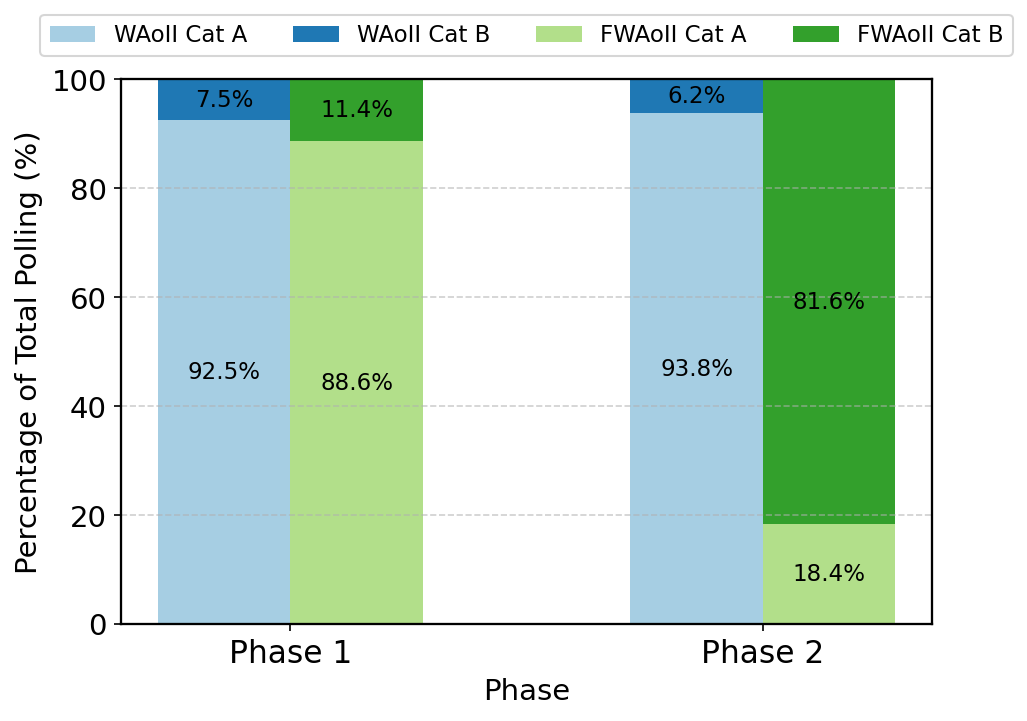}
    \caption{Percentage of polling schedule in Scenario Three. In Phase One, both \gls{fwaoii} and \gls{waoii} prioritise sensors from the more dynamic category, while polling the stable Category B sensors less frequently. After the process reverses at \(t = 5000\), \gls{fwaoii} adapts by shifting its focus to the now-changing category. In contrast, \gls{waoii} continues to deprioritise the previously dynamic category, as its rate of change remains low for an extended period.}
    \label{fig:fwaoii_and_waoii}
\end{figure}

\begin{table}[h]
\centering
\caption{Polling Distribution and RMSE for Different Fairness Windows (\(\eta\))}
\small
\setlength{\tabcolsep}{4pt}
\begin{tabular}{ccccc}
\toprule
\(\boldsymbol{\eta}\) & \textbf{Cat.} & \textbf{Polls} & \textbf{\%} & \textbf{RMSE} \\
\midrule

\multirow{2}{*}{100}
& A & 1243 & 51.64 & \multirow{2}{*}{0.14} \\
& B & 1164 & 48.36 &  \\
\midrule

\multirow{2}{*}{300}
& A & 1190 & 62.02 & \multirow{2}{*}{0.26} \\
& B & 729  & 37.98 &  \\
\midrule

\multirow{2}{*}{500}
& A & 1239 & 87.56 & \multirow{2}{*}{0.60} \\
& B & 176  & 12.44 &  \\
\bottomrule
\end{tabular}
\label{tab:polling_mse_comparison}
\end{table}

\begin{figure}
    \centering
    \includegraphics[width=\linewidth]{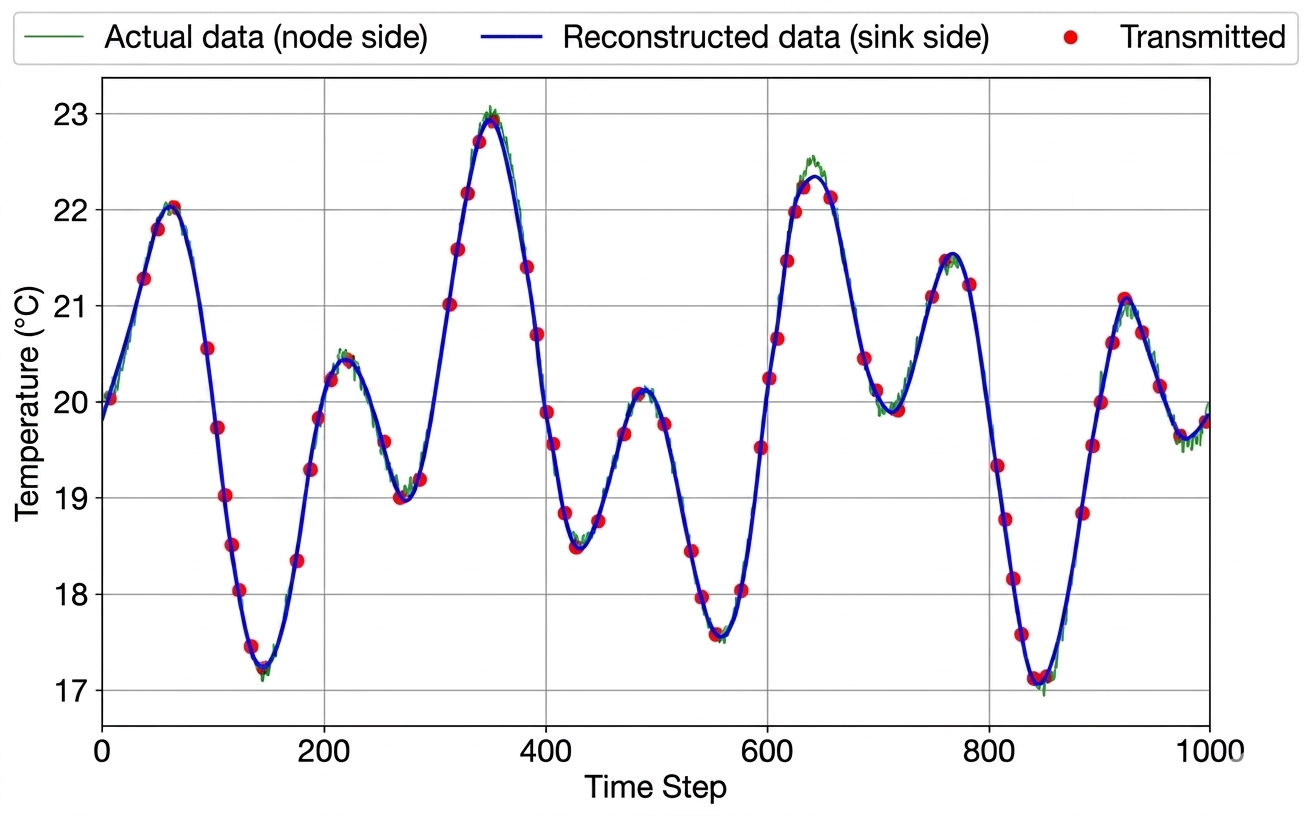}
    \caption{Time series reconstruction from the synthetic data at the sink side}
    \label{fig:spline_reconstruction_synthetic}
\end{figure}

\subsection{Application to Realistic Sensor Data}

\begin{table}[h]
\centering
\caption{Simulation Parameters}
\small
\setlength{\tabcolsep}{4pt}
\begin{tabular}{lcr}
\toprule
\textbf{Parameter} & \textbf{Symbol} & \textbf{Value} \\
\midrule
Nodes                 & \(N\)          & 50 \\
Polls per step        & \(M\)          & \{1,2,5,10\} \\

Transmission energy   & \(E_t\)        & 50 mJ \\
Sensing energy        & \(E_s\)        & 10 mJ \\
Wake-up energy        & \(E_w\)        & 10 mJ \\
Sleep energy          & \(E_0\)        & 1 mJ \\
Battery capacity      & \(E_{\max}\)  & 162 kJ \\
\bottomrule
\end{tabular}
\label{Table:simulation_parameters}
\end{table}
We now focus on sensing data from a wireless sensor deployment, we used the publicly available dataset from Intel Berkeley research lab \cite{madden2010}, which consists of sensor readings collected from 54 sensors deployed within a large building to measure temperature, humidity, voltage, and light intensity the dataset include a log of about 2.3 million readings . 
The sensor readings were collected over a period of four months using the TinyDB in-network query processing system with Mica2Dot sensors \cite{madden2010}.
 
Although the datasets in this work are collected for building monitoring, the outlined technique can be effectively applied to various other sensor-based environments. 
For instance, it could be used in agricultural settings to monitor plant humidity or in industrial facilities to track machine parameters such as temperature, pressure, or voltage. 
These applications involve typical sensor parameters and are well-suited for similar polling-based techniques, making this approach broadly applicable to a range of monitoring scenarios. We compare  the life time duration of \(D_{l}\) using ~\eqref{eqn:life_time} \gls{fwaoii} and \gls{waoii} against other techniques. As shown from Table ~\ref{tab:lifetime_comparison}

\begin{equation}
D_{l} = \frac{1}{N} \sum_{n=1}^{N} \left( \frac{E_{\text{max}}}{\omega_t E_t + \omega_w (E_s + E_w) + (1 - \omega_w) E_0} \right)
\label{eqn:life_time}
\end{equation}
where \(D_{l}\) is the average sensor lifetime, \(N\) is the total number of nodes, \(E_{\text{max}}\) is the maximum energy available, \(\omega_t\) is the transmission frequency, \(E_t\) is the energy consumed during transmission, \(\omega_w\) is the wake-up frequency, \(E_s\) is the energy consumed for sensing, \(E_w\) is the energy consumed for the wake-up process, and \(E_0\) is the energy consumed while in sleep mode.

The reconstruction performance is further evaluated using the sensing data by measuring the \gls{rmse}, in order to assess how accurately the original signals, such as temperature and humidity, can be reconstructed. As illustrated in Fig.~\ref{fig:temp_real_reconstruction} and Fig.~\ref{fig:humidty_real_reconstruction}, spline-based reconstruction is performed using the encoded transformations received at each polling instant, enabling near-accurate signal reconstruction while significantly reducing the number of transmitted packets. The results demonstrate that the proposed approach achieves accurate reconstruction with only a small reconstruction error, while requiring approximately \(90\%\) fewer packet transmissions compared to the standard \gls{rr} approach, as shown in 

\begin{table}[h]
\centering
\caption{Average network lifetime (years) for different channel constraints \(M\) using the temperature dataset (\(\lambda=0.5\)).}
\label{tab:lifetime_comparison}

\small
\setlength{\tabcolsep}{3pt}

\begin{tabular}{cccccc}
\toprule
\multirow{2}{*}{\(M\)} 
& \multicolumn{5}{c}{Network Lifetime (Years)} \\
\cmidrule(lr){2-6}

& \gls{rr} 
& \gls{waoi}
& \gls{waoii} 
& \gls{fwaoii} 
& \gls{fwaoii} \\

& 
& 
& 
& \scriptsize\((\eta=200)\)
& \scriptsize\((\eta=100)\) \\

\midrule
1  & 1.848 & 1.848 & 2.622 & 2.326 & 2.094 \\
2  & 1.127 & 1.127 & 2.494 & 2.274 & 2.012 \\
5  & 0.519 & 0.519 & 2.440 & 2.203 & 1.976 \\
10 & 0.273 & 0.273 & 2.312 & 2.136 & 1.913 \\
\bottomrule
\end{tabular}
\end{table}

Table~\ref{table:life_time_comparison_rmse_real_data}.
\begin{table*}
\centering

\caption{\small Comparison of packet transmission (\% of \gls{rr}) and \gls{rmse} for different techniques across data types.}
\label{table:life_time_comparison_rmse_real_data}
\begin{tabular}{c|ccc|ccc}

\hline
\multirow{2}{*}{Data}
    & \multicolumn{3}{c|}{Packet Transmission (\% of \gls{rr})}
    & \multicolumn{3}{c}{\gls{rmse}} \\
\cline{2-7}
    & \gls{waoii}
    & \begin{tabular}{c} \gls{fwaoii} \\ (\(\eta{=}200\)) \end{tabular}
    & \begin{tabular}{c} \gls{fwaoii} \\ (\(\eta{=}100\)) \end{tabular}
    & \gls{waoii}
    & \begin{tabular}{c} \gls{fwaoii} \\ (\(\eta{=}200\)) \end{tabular}
    & \begin{tabular}{c} \gls{fwaoii} \\ (\(\eta{=}100\)) \end{tabular} \\
\hline
Temperature
    & 12.8\%  & 15.6\%  & 18.3\% 
    & 0.69  & 0.21  & 0.19 \\
Humidity
    & 10.67\% & 11.80\% & 15.80\%
    & 0.82    & 0.70    & 0.70 \\
Light Intensity
    & 10.93\% & 16.16\% & 19.16\%
    & 20.01   & 19.05   & 19.05 \\
\hline
\end{tabular}
\end{table*}
\subsection{Comparison to RL-Based Approaches}

To provide a fair comparison with learning-based \gls{rmab} approaches, we extend our evaluation to include a reinforcement learning (RL)-based Whittle index method. In particular, we adopt the Whittle Index Q-Learning (WIQL) approach proposed in \cite{jonah2026adaptive}, which learns the Whittle index policy under the average reward setting. This allows for a direct comparison between our proposed \gls{waoii}-based policy and existing RL-based solutions within a consistent problem formulation.

To formalise the objective of minimising the \gls{aoii} while accounting for a penalty cost~\( \lambda \) associated with activating an arm, we define the instantaneous cost function as
\begin{equation}
C(s(t), a(t)) = \sum_{i=1}^{N} \text{AoII}_{i}(t) + \lambda \, a_{i}(t),
\label{eq:cost_function}
\end{equation}
where \( \text{AoII}_{i}(t) \) represents the information inaccuracy at time~\( t \) for arm~\( i \), and \( a_{i}(t) \in \{0,1\} \) denotes whether arm~\( i \) is activated. The parameter~\( \lambda \) introduces a trade-off between maintaining information freshness and the cost of activation.

Since the system aims to minimise the total cost, the reward is defined as the negative of the cost function, given by
\begin{equation}
R(s(t), a(t)) = -C(s(t), a(t))
\label{eq:reward_function}
\end{equation}
Under this formulation, maximising the expected cumulative reward is equivalent to minimising the long-term cost of information inaccuracy and transmission. This reformulation enables a consistent comparison with RL-based methods, which are typically expressed in a reward maximisation framework.

\begin{figure*}[h]
    \centering

    \subfloat[\(M = 1\)]{
        \includegraphics[width=0.3\textwidth]{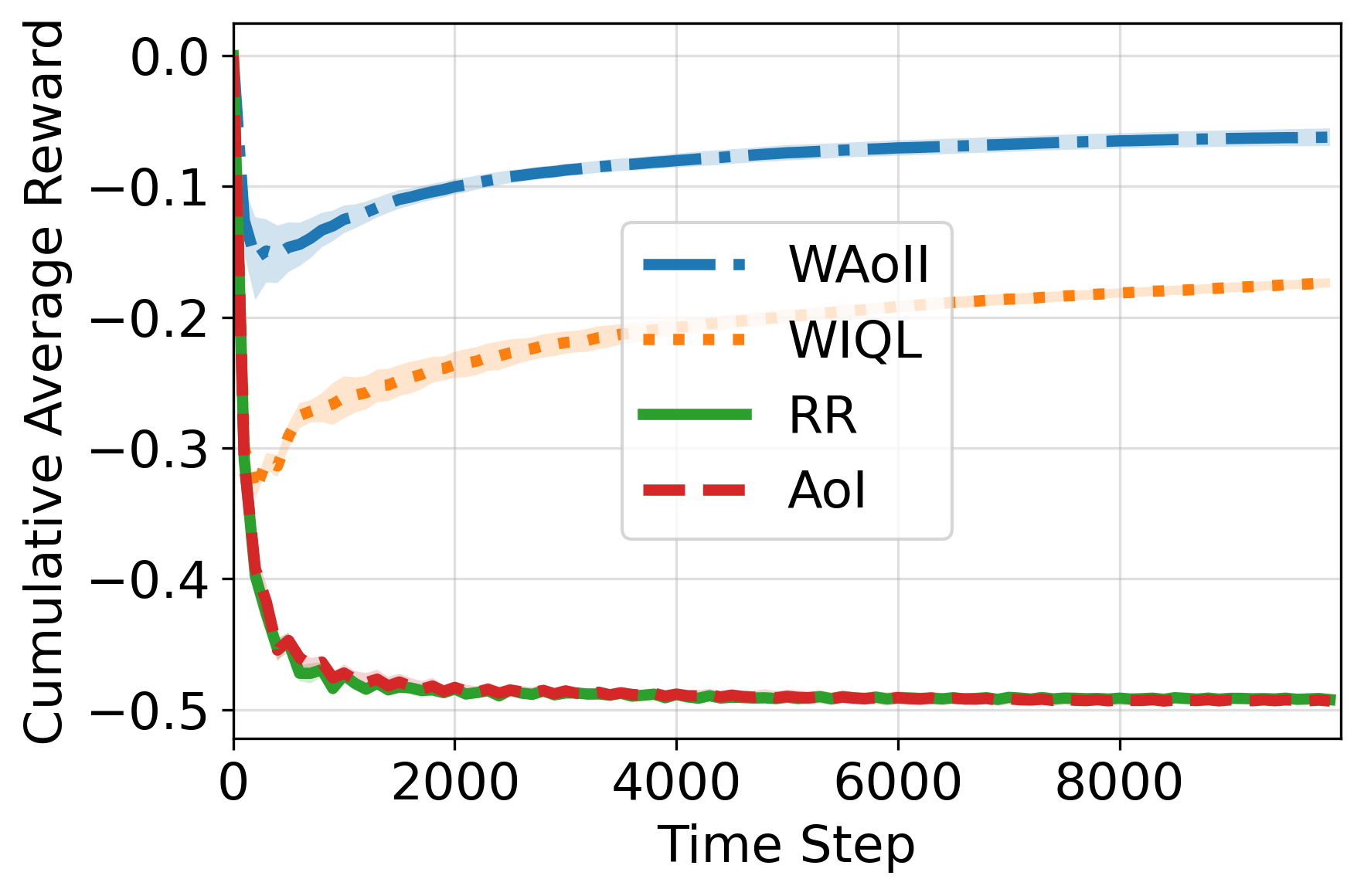}
        \label{fig:m1}
    }
    \hfill
    \subfloat[\(M = 5\)]{
        \includegraphics[width=0.3\textwidth]{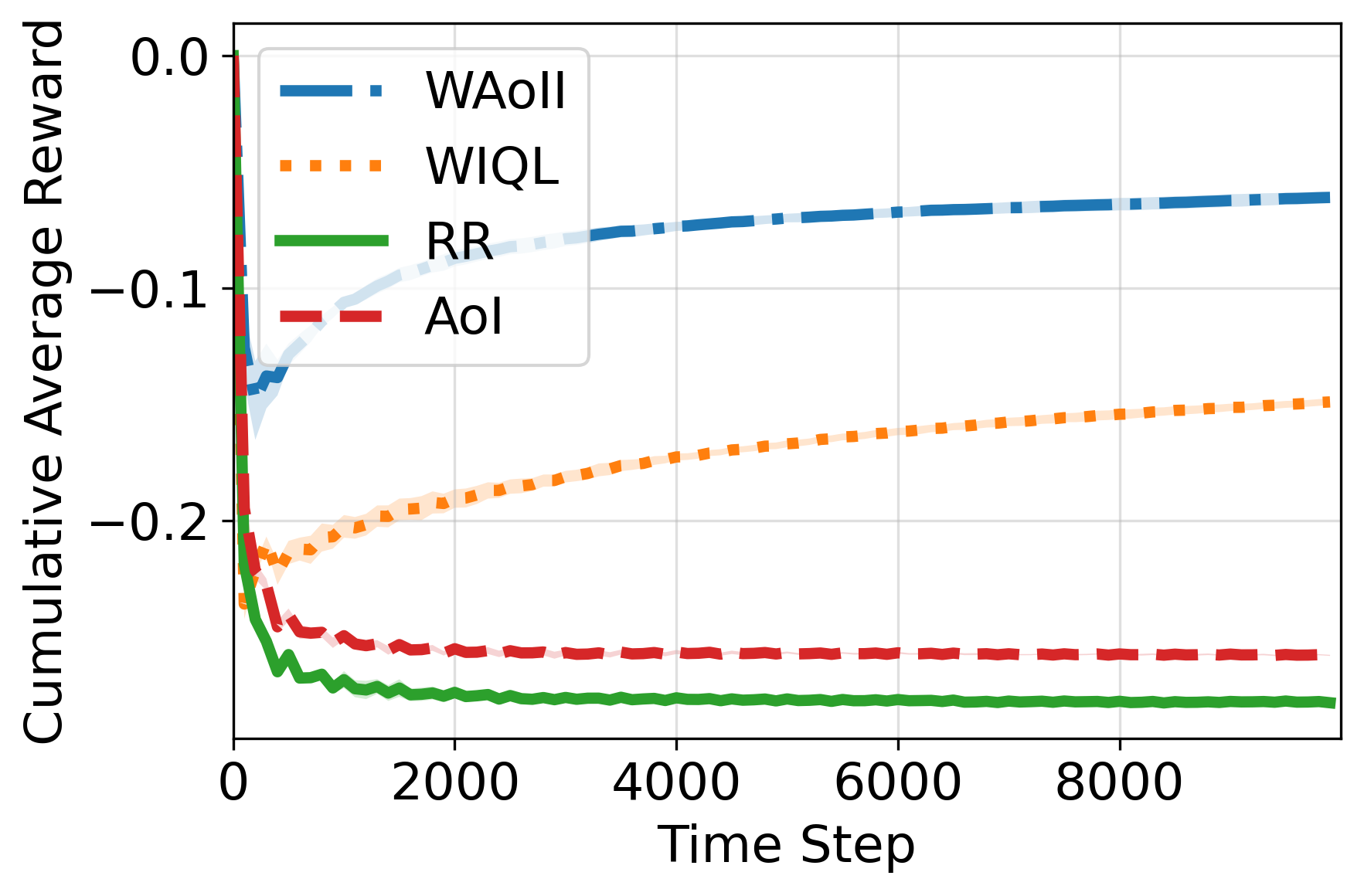}
        \label{fig:m5}
    }
    \hfill
    \subfloat[\(M = 10\)]{
        \includegraphics[width=0.3\textwidth]{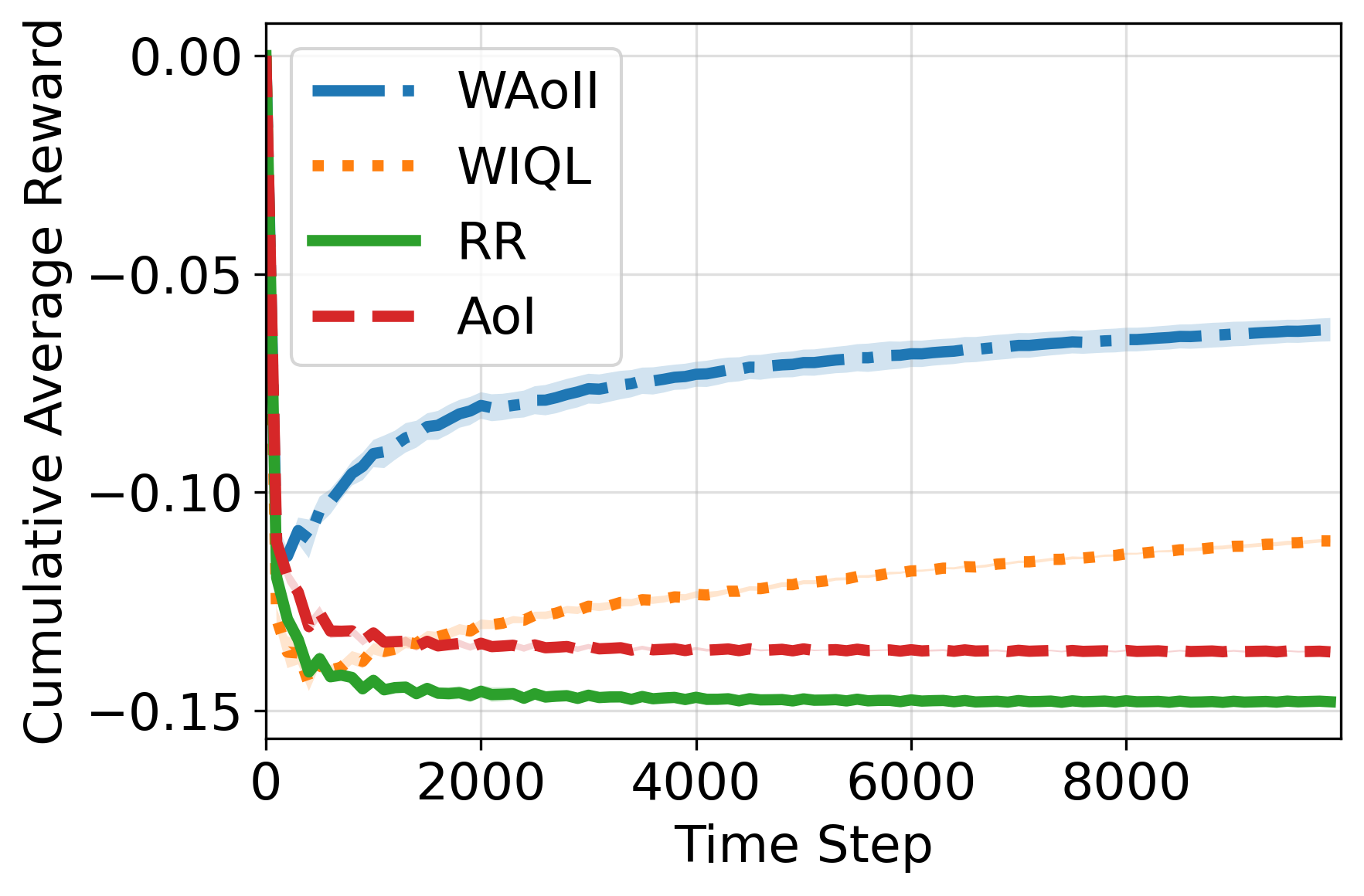}
        \label{fig:m10}
    }

    \caption{Comparison of rewards for various scheduling techniques under different values of \(M\).
     The WAoII approach consistently achieves higher reward compared to WIQL, AoI, and RR policies. The performance gap is most pronounced when the ratio \(N/M\) is high, and decreases as the number of activations \(M\) increases.}
    
    \label{fig:standard_rl_waoii}
\end{figure*}

Fig.~\ref{fig:standard_rl_waoii} presents the comparison of rewards for the proposed WAoII policy against the learning-based WIQL method and classical baselines (AoI and RR) under different scheduling constraints \(M\). The reward is defined as the negative AoII, such that higher reward corresponds to lower information staleness and improved scheduling performance.

Across all settings, the \gls{waoii} policy consistently achieves the highest reward, indicating more effective prioritisation of informative updates. While the WIQL approach demonstrates competitive performance, the performance gap remains relatively small. This suggests that the proposed index-based approach is able to capture most of the gains typically associated with learning-based RMAB strategies, without requiring explicit learning of the index values.

The performance difference is more pronounced when the ratio \(N/M\) is high (e.g., \(M = 1\)), where the scheduling problem is more constrained and efficient prioritisation becomes critical. As \(M\) increases, all methods converge towards similar performance levels due to the increased availability of transmission opportunities, reducing the influence of the scheduling policy.

Importantly, the proposed WAoII approach achieves these results without the need for training, exploration, or hyperparameter tuning, unlike WIQL-based methods. This makes it particularly suitable for real-time and resource-constrained wireless sensor network deployments, where immediate and stable decision-making is required.

\section{Discussion}
In this section, we evaluate the performance of our fairness-based \gls{aoii} Whittle index polling policy using edge mining techniques. 
The evaluation focuses on metrics such as average age, polling distribution, percentage of transmitted packets, and energy consumption.

\begin{figure}
    \centering
    \includegraphics[width=\linewidth]{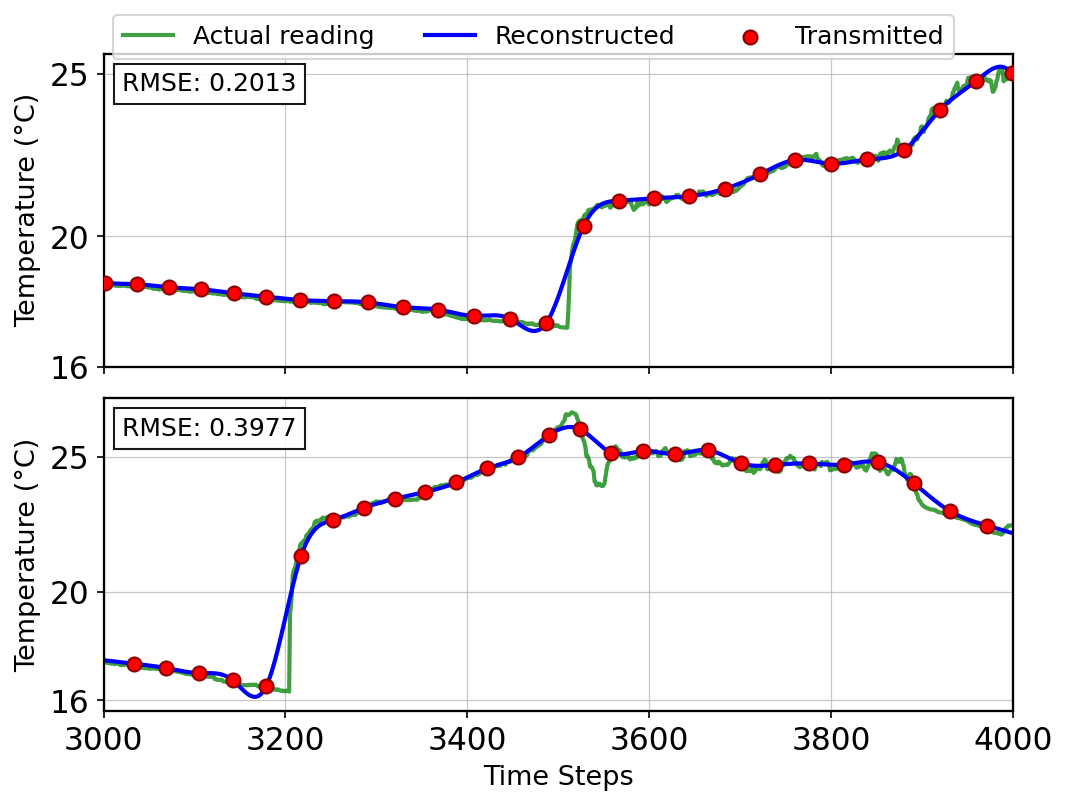}
  \caption{Example time series reconstruction from the temperature real sensor dataset. This demonstrates that with reduced packet transmissions, the \gls{waoii} policy, leveraging state estimation, can accurately reconstruct the original process using the point-in-time value \(x_{1}\) and rate of change \(x_{2}\).}

    \label{fig:temp_real_reconstruction}
\end{figure}

\begin{figure}
    \centering
    \includegraphics[width=\linewidth]{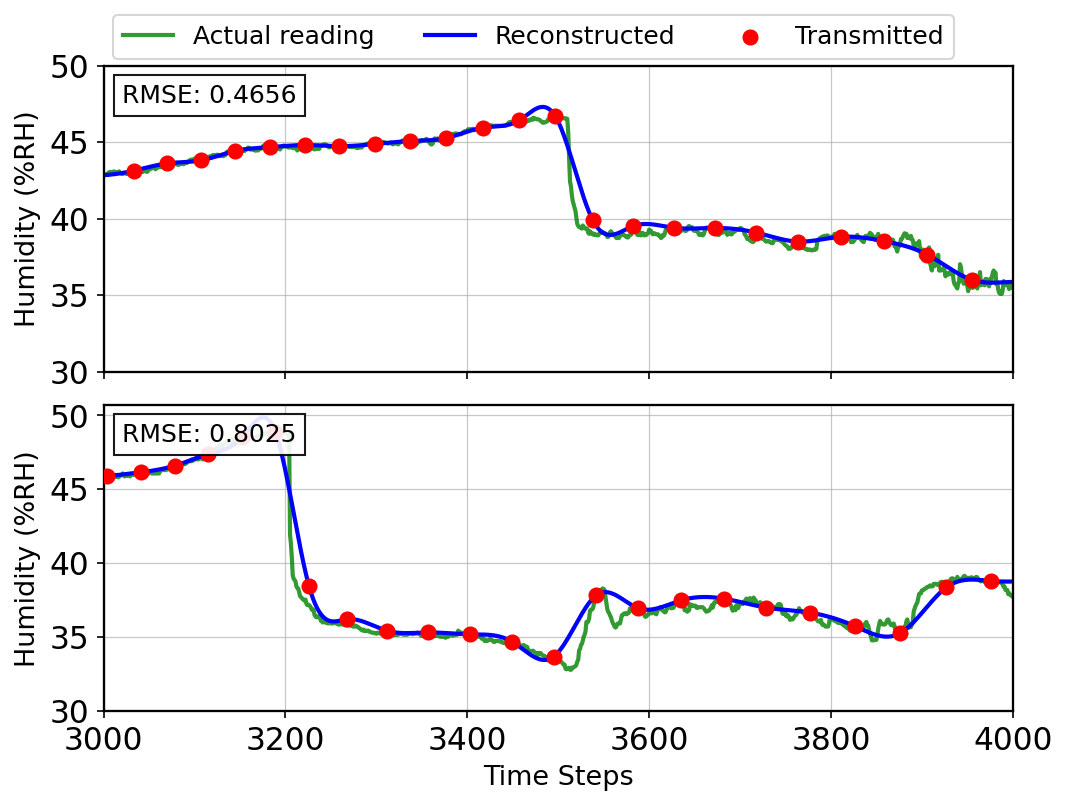}
    \caption{Example time series reconstruction from the humidity real sensor dataset.}
    \label{fig:humidty_real_reconstruction}
\end{figure}
\subsection{Energy consumption}
The \gls{fwaoii} and \gls{waoii} policies significantly outperform \gls{rr} and \gls{waoi} in energy efficiency by transmitting only informative updates based on node-state estimation. Both methods achieve accuracy comparable to \gls{kf} but with substantially reduced packet transmissions (over 90\% fewer), leading to lower energy consumption and storage requirements (see Table~\ref{table:life_time_comparison_rmse_real_data}). Although \gls{waoii} slightly reduces transmissions compared to \gls{fwaoii}, the latter provides improved \gls{rmse} and quicker adaptability to rapidly changing sensor conditions.

\subsection{Impact of Polling Ratio, Fairness Constraint, and Penalty}

The performance of the \gls{fwaoii} policy is evaluated with respect to the polling ratio \(M\), fairness constraint \(\eta\), and penalty \(\lambda\). Increasing \(M\) relative to total nodes \(N\) reduces packet transmissions compared to the \gls{rr} method, though overall transmissions rise with higher \(M\), improving state estimates at the expense of increased energy use (see Table~\ref{table:packet_transmission_M}).

Lower fairness constraints (\(\eta\)) lead to more frequent sampling and better \gls{rmse}, but also increase packet transmissions and energy consumption. The \gls{fwaoii} policy consistently outperforms standard \gls{rr}-based methods for \gls{wur} by prioritising updates that minimise \gls{aoii}, ensuring timely and relevant information. Additionally, edge mining techniques significantly reduce estimation errors compared to raw-data remote estimation approaches.

Our results demonstrate that the \gls{fwaoii} combined with edge mining maintains nearly the same overall information quality at the sink or remote observer, as evidenced by the relatively low \gls{rmse}. 
At the same time, it significantly reduces the number of transmitted packets and prioritises the most relevant updates.

\section{Conclusion and future work}
In this work, we adopt a practical approach to developing \gls{waoii} and \gls{fwaoii} policies that do not rely on assumptions about known probability distributions of system dynamics. These policies prioritise status updates in a \gls{wur}-based polling system constrained by a channel limitation, allowing only a subset of \( M \) out of \( N \) nodes to be sampled at any time. By using state estimation and transforming raw data into meaningful representations, we enable accurate reconstruction of the underlying process at the sink. This approach facilitates the development of an efficient online Whittle index policy.

Our results demonstrate that the \gls{fwaoii} policy dynamically adapts to varying network conditions and sensor readings. Both \gls{waoii} and \gls{fwaoii} achieve acceptable \gls{rmse} across different scenarios, highlighting their robustness for \gls{iot} applications that require reliable data transmission without compromising accuracy. By prioritising nodes with higher \gls{aoii}, the polling policy ensures that the most valuable updates reach the sink, maintaining accurate state estimates with fewer transmissions.
The inclusion of a fairness constraint further guarantees that all nodes have the opportunity to transmit, enhancing adaptability when sensor conditions change after extended periods. The reduction in packet transmissions by the \gls{waoii} and the \gls{fwaoii} results in lower energy consumption, less network congestion, and reduced storage demands, all while delivering timely and critical information.

While this study focused on linear dynamic systems, an important avenue for future work is extending the proposed framework to explicitly handle nonlinear system dynamics. In addition, incorporating uncertainty-aware estimation techniques, such as probabilistic filtering methods, could further improve robustness in noisy or highly dynamic environments. Another key direction is the practical implementation of the proposed scheduling approach on a real-world testbed to evaluate its performance and quantify the actual energy savings achievable in operational sensing systems.

\appendix
\section{\break Proof of indexability}
\label{appendix:proof_indexability}

As the penalty weight \( \lambda \) increases, the cost associated with taking the active action \( V_{\text{active}}^{\lambda}(s) \) also increases. Consequently, the passive action becomes optimal in a greater number of states, especially when the expected benefit of polling does not outweigh its cost.
To formalise this within the \gls{aoii} framework, we define the value functions under penalty \( \lambda \) as follows:
For the active action (i.e., polling), a successful update resets the \gls{aoii} to zero. The immediate cost \( C(s) \) is defined as the current \gls{aoii} value in state \( s \), expressed as:

\begin{equation}
    C(s) = (t - u)  x_2(u)
\end{equation}
where \( (t - u) \) is the time elapsed since the last update, and \( x_2(u) \) is the rate of change of the observed parameter. Since a successful polling resets \gls{aoii} to zero.Assuming a probability \( \hat{\rho} = 1 \), the expected next state value after polling is zero. Thus:
\begin{equation}
  V_{\text{active}}^{\lambda}(s) = C(s) + \lambda = (t - u)  x_2(u) + \lambda.  
\end{equation}

For the passive action (no polling), the \gls{aoii} continues to increase over time. The value of taking the passive action in state \( s \), denoted \( Q(s, \text{passive}) \), includes both the immediate cost and the expected value of the next state. If the node is not polled, then:
\begin{equation}
   Q(s, \text{passive}) = C(s) + \mathbb{E}[V(s') \mid \text{passive}]
\end{equation}
where \( C(s) = (t - u)  x_2(u) \) is the current \gls{aoii}, and \( \mathbb{E}[V(s') \mid \text{passive}] = (t+1 - u) \cdot x_2(u) \) is the expected \gls{aoii} at the next time step. Therefore:
\begin{equation}
   Q(s, \text{passive}) = (t - u)  x_2(u) + (t+1 - u)  x_2(u)
\end{equation}

To determine when the passive action is optimal, we compare the value functions:
\begin{equation}
 V_{\text{active}}^{\lambda}(s) \geq V_{\text{passive}}(s)   
\end{equation}
Substituting the expressions, this inequality becomes:
\[
(t - u)  x_2(u) + \lambda \geq (t - u)  x_2(u) + (t+1 - u)  x_2(u).
\]
Cancelling \( (t - u)  x_2(u) \) from both sides, we find:
\[
\lambda \geq (t+1 - u)  x_2(u).
\]

Define this as a threshold \( c(s) = (t+1 - u)  x_2(u) \) for each state \( s \). Thus, the passive action is optimal if \( \lambda \geq c(s) \).

Define the passive set for a given penalty \( \lambda \) as:
\[
S_{\text{passive}}(\lambda) = \{ s \in S : \lambda \geq c(s) \}.
\]
As \( \lambda \) increases, the inequality \( \lambda \geq c(s) \) holds for more states, meaning that \( S_{\text{passive}}(\lambda) \) expands. Specifically, if \( \lambda < \lambda_h \), then \( S_{\text{passive}}(\lambda) \subseteq S_{\text{passive}}(\lambda_h) \).

Thus, as \( \lambda \) increases from \( -\infty \) to \( +\infty \), the passive set expands monotonically from \( \emptyset \) to the entire state space \( S \), confirming the indexability of the arm. \(\square\)

\subsection{Threshold for optimal polling}
\label{appendix:proof_threshold}
To calculate \( c \), we consider the Bellman equation with a penalty term \( c_i \), where the objective is to find \( c \) such that the Q-value for pulling the arm (i.e., \( a = 1 \)) is equal to the Q-value for not pulling the arm (i.e., \( a = 0 \)). The Bellman equation for the Q-value \( Q_{c_i}(s, a) \) is given by:
\begin{equation}
  Q_{c_i}(s, a) = c a + C(s) +  \sum_{s' \in S} P_i(s, a, s') V_{c}(s')  
\end{equation}
where:
\begin{itemize}
    \item \( Q_{c_i}(s, a) \) is the Q-value for being in state \( s \) and taking action \( a \).
    \item \( C(s) \) is the immediate cost for being in state \( s \) 
    \item \( P_i(s, a, s') \) is the transition probability from state \( s \) to state \( s' \) under action \( a \).
    \item \( V_{c}(s) = \max_{a \in A} Q_{c}(s, a) \) is the value function for state \( s \).
\end{itemize}

To determine when the polling action becomes optimal given a global penalty \( \lambda \), we analyze the \gls{aoii} dynamics under both actions. For the active action (polling), the Q-value can be expressed as:
\begin{equation}
 Q_{c_i}(s, 1) = c + C(s) + \gamma \sum_{s' \in S} P_i(s, 1, s') V_{c}(s') .   
\end{equation}
where \( C(s) \) represents the immediate reward, which depends only on the current state \( s \) and is independent of the action, \( c_i \) is the cost associated with outdated information, and \( \gamma \) is the discount factor. For the passive action (not polling), the Q-value is given by:
\begin{equation}
  Q_{c_i}(s, 0) = C(s) + \sum_{s' \in S} P_i(s, 0, s') V_{c}(s') .  
\end{equation}
Since \( C(s) \) is independent of the action, these terms cancel out in the derivation. Equating \( Q_{c}(s, 1) = Q_{c}(s, 0) \), we derive:
\begin{equation}
  c =  \sum_{s' \in S} \left( P_i(s, 0, s') - P_i(s, 1, s') \right) V_{c}(s')  
\end{equation}
We examine the evolution of AoII under both actions. If the node is not polled, AoII increases linearly with time, described by:
\begin{equation*}
  \delta \text{AoII}(t+1) = (t+1 - u)  x_2(u)  
\end{equation*}
where \( (t+1 - u) \) is the elapsed time since the last update, and \( x_2(u) \) is the rate of change of the observed parameter. 

In the case where the node is polled, the AoII either resets to zero with probability \( \hat{\rho} \) (successful update) or remains unchanged if the transmission fails. As given in Equation~\ref{eqn:evolution_of_aoii}, this dynamic is represented by:
\[
\delta \text{AoII}(t+1) =
\begin{cases}
0, & \text{with probability } \hat{\rho} \\
(t+1 - u)  x_2(u), & \text{otherwise}.
\end{cases}
\]

Assuming a perfect channel (\( \hat{\rho} = 1 \)), each polling action successfully resets AoII to zero. Thus, we have:
\[
c = (t+1 - u)  x_2(u).
\]

If \( x_2(u) \neq 0 \), \( c \) increases over time, eventually surpassing \( \lambda \). This implies that, when the parameter is actively changing (\( x_2(u) > 0 \)), the cost of outdated information will eventually justify the polling action, making it the optimal choice. Therefore, we conclude that under the condition \( x_2(u) \neq 0 \), a threshold \( c_i \) exists beyond which polling is optimal. However, if \( x_2(u) = 0 \), AoII remains constant, and polling may not become optimal.

Whittle Index Policy for Optimal Scheduling

The Whittle index \( W_i(P_i, s_i) \) for each arm \( i \) is determined by solving this threshold condition, allowing for optimal scheduling as follows:
\begin{itemize}
    \item Poll arms where \( W_i(P_i, s_i) > \lambda \),
    \item Leave arms idle where \( W_i(P_i, s_i) \leq \lambda \).
\end{itemize}

This approach provides a systematic method for determining the Whittle index policy, enabling optimal action selection when \gls{aoii} is significant.

\subsection{Complexity Comparison}

We also analyse the computational complexity of the considered scheduling techniques and provide a comparative assessment of their per-step runtime characteristics.

\begin{table}[h]
\centering
\caption{Per-step computational complexity comparison of scheduling techniques.}
\label{tab:policy_complexity_summary}
\renewcommand{\arraystretch}{1.1}
\setlength{\tabcolsep}{6pt}
\begin{tabular}{l c}
\hline
Algorithm & Time per step \\
\hline
Round Robin (RR) & $O(1)$ \\
AoI-based Scheduling & $O(N \log N)$ \\
WAoII (Proposed) & $O(N \log N)$ \\
WIQL (RL-based) & $O(N \log N + N|\mathcal{S}||\mathcal{A}|)$ \\
\hline
\end{tabular}
\end{table}

Table~\ref{tab:policy_complexity_summary} summarises the per-step computational complexity of the evaluated scheduling techniques. The Round Robin (RR) policy has constant time complexity $O(1)$, as it selects the next $M$ nodes in a cyclic order without requiring any state evaluation or prioritisation, making it independent of the total number of nodes $N$.

Both the AoI-based and the proposed WAoII policies incur a complexity of $O(N \log N)$ in the current implementation. This arises from computing a priority index for each node (AoI or AoII-based) followed by sorting to select the top-$M$ nodes for activation. While sorting is used here for simplicity and consistency across methods, more efficient selection strategies (e.g., partial sorting or heap-based selection) could reduce this to $O(N)$ without affecting the scheduling outcome.

In contrast, the WIQL (learning-based) approach introduces additional computational overhead due to the need for value function updates across the state and action spaces. This results in a per-step complexity of $O(N \log N + N|\mathcal{S}||\mathcal{A}|)$, where $|\mathcal{S}|$ and $|\mathcal{A}|$ denote the state and action spaces, respectively. Furthermore, RL-based methods require continuous learning, exploration, and updates, which increases runtime cost and may affect convergence and stability in real-time settings.

Overall, while WAoII and AoI-based approaches have higher computational cost than simple heuristics such as RR, they remain scalable for practical system sizes. Compared to WIQL, the proposed WAoII approach avoids the additional overhead associated with learning and maintains predictable runtime behaviour, making it more suitable for real-time and resource-constrained wireless sensor network deployments.

Furthermore, despite the moderate increase in computational cost relative to RR, the proposed WAoII approach delivers substantial performance gains. Experimental results show that WAoII achieves over $70\%$ reduction in \gls{aoii} compared to both AoI-based and Round Robin policies across multiple scenarios. This demonstrates a favourable performance--complexity trade-off, where similar computational complexity to AoI-based scheduling yields significantly improved information freshness and overall system performance.

\subsection{Effect of the Fairness Constraint on Optimality}

The standard WAoII policy selects, at each time slot, the \(M\) nodes with the largest Whittle indices. Let \(W_i(t)\) denote the Whittle index of node \(i\) at time \(t\), and define the unconstrained scheduling set as
\[
S_t^{\star} = \arg\max_{S \subseteq \{1,\ldots,N\}, |S|=M} \sum_{i \in S} W_i(t).
\]
Thus, \(S_t^{\star}\) is obtained by selecting the top-\(M\) nodes according to their Whittle indices, which maximises the instantaneous index sum.

To enforce fairness, the FWAoII policy introduces a constraint requiring that each node is scheduled at least once within a window of length \(\eta\). Let
\[
\mathcal{F}_t = \{i : t - t_i^{\mathrm{last}} \geq \eta\}
\]
denote the set of nodes that violate the fairness window at time \(t\), where \(t_i^{\mathrm{last}}\) is the most recent scheduling time of node \(i\). The FWAoII policy first selects the top-\(M\) nodes according to the Whittle index and then replaces the lowest-index selected nodes with elements from \(\mathcal{F}_t\), when necessary, to satisfy the fairness constraint.

\begin{proposition}
\label{prop:fairness_loss}
Let \(S_t^{\star}\) be the unconstrained WAoII scheduling set and let \(S_t^{F}\) be the scheduling set selected by the fairness-aware FWAoII policy. If \(r_t = |S_t^{\star} \setminus S_t^{F}|\) nodes are replaced due to the fairness constraint, then the per-step optimality loss of FWAoII relative to WAoII is bounded by
\[
0 \leq 
\sum_{i \in S_t^{\star}} W_i(t)
-
\sum_{i \in S_t^{F}} W_i(t)
\leq
\sum_{k=1}^{r_t}
\left(
W_{j_k}(t) - W_{f_k}(t)
\right),
\]
where \(j_k\) denotes a node removed from the unconstrained top-\(M\) set and \(f_k\) denotes a fairness-constrained node inserted into the schedule.
\end{proposition}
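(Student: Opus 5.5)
The plan is to write \(S_t^{F}\) explicitly as a modification of \(S_t^{\star}\) and compute the difference of index sums directly.

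\textbf{Step 1: describe \(S_t^{F}\).} By construction of the FWAoII rule, \(S_t^{F}\) is obtained from \(S_t^{\star}\) by removing \(r_t\) nodes \(j_1,\dots,j_{r_t}\in S_t^{\star}\) and inserting \(r_t\) nodes \(f_1,\dots,f_{r_t}\in\mathcal{F}_t\setminus S_t^{\star}\), keeping \(|S_t^{F}|=M\). Fairness-violating nodes that already lie in \(S_t^{\star}\) are kept and do not count as replacements, so \(r_t=|S_t^{\star}\setminus S_t^{F}|=|S_t^{F}\setminus S_t^{\star}|\). Pair the removed and inserted nodes in any order, for example the \(k\)-th lowest-index removed node with the \(k\)-th inserted node.

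\textbf{Step 2: compute the difference.} The common nodes \(S_t^{\star}\cap S_t^{F}\) cancel, which gives the identity
\[
\sum_{i\in S_t^{\star}}W_i(t)-\sum_{i\in S_t^{F}}W_i(t)=\sum_{k=1}^{r_t}\bigl(W_{j_k}(t)-W_{f_k}(t)\bigr).
\]
So the upper bound holds with equality for any pairing. There is no slack to exploit; the content is only this bookkeeping. If the fairness loop in Algorithm~\ref{alg:Whittle_index_polling} happened to swap a node more than once within a slot, I would first reduce to this net description via the symmetric difference of the two sets. Net cancellation makes intermediate swaps irrelevant.

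\textbf{Step 3: nonnegativity.} This is the only step that uses optimality. \(S_t^{\star}\) maximises \(\sum_{i\in S}W_i(t)\) over all \(M\)-subsets, and \(S_t^{F}\) is one such subset, so the difference is \(\ge 0\). Equivalently, each \(j_k\in S_t^{\star}\) and \(f_k\notin S_t^{\star}\) give \(W_{j_k}(t)\ge W_{f_k}(t)\) by the top-\(M\) property, so every summand is nonnegative.

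The main obstacle is not analytic but definitional. One must make precise that the replacements are exactly the set difference and that \(|S_t^{F}|=M\) is preserved. Algorithm~\ref{alg:Whittle_index_polling} is stated loosely, so I would fix this formalisation explicitly at the start of the proof.
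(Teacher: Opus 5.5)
Your proposal is correct and follows essentially the same argument as the paper's proof sketch: cancel the common nodes so that the loss reduces to the sum of per-replacement gaps $W_{j_k}(t)-W_{f_k}(t)$, and obtain nonnegativity from the fact that $S_t^{\star}$ maximises the index sum over all $M$-subsets. Your two additions are tightenings of the same route rather than a different approach: noting that the upper bound in fact holds with equality, and explicitly formalising the loosely stated replacement rule of the fairness algorithm as a symmetric difference with $|S_t^{F}|=M$.
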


\textit{Proof Sketch.}
The unconstrained WAoII policy maximises the instantaneous sum of Whittle indices by selecting the top-\(M\) nodes. The FWAoII policy deviates from this selection only when fairness violations occur. In such cases, a node \(j_k\) from the selected set is replaced by a fairness-violating node \(f_k\). The resulting loss for each replacement is given by the index difference \(W_{j_k}(t) - W_{f_k}(t)\). Summing over all replacements yields the stated bound. Since \(S_t^{\star}\) maximises the unconstrained objective, the loss is non-negative. \(\square\)

This result shows that the fairness constraint introduces a bounded deviation from the unconstrained Whittle-index policy. The magnitude of this deviation depends on both the number of fairness-triggered replacements and the index gap between the replaced nodes and the inserted nodes. Consequently, FWAoII achieves a controlled trade-off between index-based optimality and fairness in node scheduling.

In the limiting case where \(\eta \rightarrow \infty\), no node violates the fairness window, i.e.,
\[
\mathcal{F}_t = \emptyset,
\]
and therefore no replacements occur. This implies
\[
S_t^{F} = S_t^{\star},
\]
so that FWAoII reduces to the standard WAoII policy and preserves the original Whittle-index optimality. Conversely, smaller values of \(\eta\) enforce fairness more frequently, improving node coverage and adaptability, but potentially increasing deviation from the unconstrained optimum.

\section*{Acknowledgment}
The authors would like to thank Coventry University for the trailblazer  scholarship given to Sokipriala Jonah. This work was completed while Seong Ki Yoo was affiliated with Coventry University. BT, his current affiliation, does not express any opinion on the concepts, conclusions, or recommendations.
\bibliographystyle{IEEEtran}
\bibliography{reference} 

@article{maatouk2022age,
  title={The age of incorrect information: An enabler of semantics-empowered communication},
  author={Maatouk, Ali and Assaad, Mohamad and Ephremides, Anthony},
  journal={IEEE Transactions on Wireless Communications},
  volume={22},
  number={4},
  pages={2621--2635},
  year={2022},
  publisher={IEEE}
}

@article{deshpande2023energy,
  title={Energy-Efficient Internet of Things Monitoring with Content-Based Wake-Up Radio},
  author={Deshpande, Anay Ajit and Chiariotti, Federico and Zanella, Andrea},
  journal={arXiv preprint arXiv:2312.04294},
  year={2023}
}

@article{shiraishi2020content,
  title={Content-based wake-up for top-k query in wireless sensor networks},
  author={Shiraishi, Junya and Yomo, Hiroyuki and Huang, Kaibin and Stefanovi{\'c}, {\v{C}}edomir and Popovski, Petar},
  journal={IEEE Transactions on Green Communications and Networking},
  volume={5},
  number={1},
  pages={362--377},
  year={2020},
  publisher={IEEE}
}

@article{gaura2013edge,
  title={Edge mining the internet of things},
  author={Gaura, Elena I and Brusey, James and Allen, Michael and Wilkins, Ross and Goldsmith, Dan and Rednic, Ramona},
  journal={IEEE Sensors Journal},
  volume={13},
  number={10},
  pages={3816--3825},
  year={2013},
  publisher={IEEE}
}

@inproceedings{kriouile2023pull,
  title={When to Pull Data from Sensors for Minimum Age of Incorrect Information},
  author={Kriouile, Saad and Assaad, Mohamad},
  booktitle={2023 21st International Symposium on Modeling and Optimization in Mobile, Ad Hoc, and Wireless Networks (WiOpt)},
  pages={603--610},
  year={2023},
  organization={IEEE}
}

@article{deng2020ieee,
  title={IEEE 802.11 ba wake-up radio: Performance evaluation and practical designs},
  author={Deng, Der-Jiunn and Lien, Shao-Yu and Lin, Chun-Cheng and Gan, Ming and Chen, Hsing-Chung},
  journal={IEEE Access},
  volume={8},
  pages={141547--141557},
  year={2020},
  publisher={IEEE}
}

@article{oller2015has,
  title={Has time come to switch from duty-cycled MAC protocols to wake-up radio for wireless sensor networks?},
  author={Oller, Joaquim and Demirkol, Ilker and Casademont, Jordi and Paradells, Josep and Gamm, Gerd Ulrich and Reindl, Leonhard},
  journal={IEEE/ACM Transactions on Networking},
  volume={24},
  number={2},
  pages={674--687},
  year={2015},
  publisher={IEEE}
}

@inproceedings{yomo2015radio,
  title={Radio-on-demand sensor and actuator networks (ROD-SAN): System design and field trial},
  author={Yomo, Hiroyuki and Abe, Kenichi and Ezure, Yuichiro and Ito, Tetsuya and Hasegawa, Akio and Ikenaga, Takeshi},
  booktitle={2015 IEEE Global Communications Conference (GLOBECOM)},
  pages={1--6},
  year={2015},
  organization={IEEE}
}

@article{zhuo2023value,
  title={Value of Information-Based Packet Scheduling Scheme for AUV-Assisted UASNs},
  author={Zhuo, Xiaoxiao and Wu, Wen and Tang, Liang and Qu, Fengzhong and Shen, Xuemin},
  journal={IEEE Transactions on Wireless Communications},
  year={2023},
  publisher={IEEE}
}

@article{yuan2017weighted,
  title={Weighted linear dynamic system for feature representation and soft sensor application in nonlinear dynamic industrial processes},
  author={Yuan, Xiaofeng and Wang, Yalin and Yang, Chunhua and Ge, Zhiqiang and Song, Zhihuan and Gui, Weihua},
  journal={IEEE Transactions on Industrial Electronics},
  volume={65},
  number={2},
  pages={1508--1517},
  year={2017},
  publisher={IEEE}
}

@misc{madden2010,
  author       = {Samuel Madden},
  title        = {Intel Lab Data},
  howpublished = {\url{http://db.lcs.mit.edu/labdata/labdata.html}},
  month        = jul,
  year         = 2010,
  note         = {Online; accessed 2010-07-01}
}

@article{kozlowski2019energy,
  title={Energy efficiency trade-off between duty-cycling and wake-up radio techniques in IoT networks},
  author={Koz{\l}owski, Adam and Sosnowski, Janusz},
  journal={Wireless Personal Communications},
  volume={107},
  number={4},
  pages={1951--1971},
  year={2019},
  publisher={Springer}
}

@article{liang2024bayesian,
  title={A bayesian approach to online learning for contextual restless bandits with applications to public health},
  author={Liang, Biyonka and Xu, Lily and Taneja, Aparna and Tambe, Milind and Janson, Lucas},
  journal={arXiv preprint arXiv:2402.04933},
  year={2024}
}

@inproceedings{wang2023optimistic,
  title={Optimistic whittle index policy: Online learning for restless bandits},
  author={Wang, Kai and Xu, Lily and Taneja, Aparna and Tambe, Milind},
  booktitle={Proceedings of the AAAI Conference on Artificial Intelligence},
  volume={37},
  number={8},
  pages={10131--10139},
  year={2023}
}

@inproceedings{ayik2023optimization,
  title={Optimization of aoii and qaoii in multi-user links},
  author={Ayik, Muratcan and Ceran, Elif Tugce and Uysal, Elif},
  booktitle={IEEE INFOCOM 2023-IEEE Conference on Computer Communications Workshops (INFOCOM WKSHPS)},
  pages={1--6},
  year={2023},
  organization={IEEE}
}

@article{strinati20216g,
  title={6G networks: Beyond Shannon towards semantic and goal-oriented communications},
  author={Strinati, Emilio Calvanese and Barbarossa, Sergio},
  journal={Computer Networks},
  volume={190},
  pages={107930},
  year={2021},
  publisher={Elsevier}
}

@article{li2024toward,
  title={Toward Goal-Oriented Semantic Communications: New Metrics, Framework, and Open Challenges},
  author={Li, Aimin and Wu, Shaohua and Meng, Siqi and Lu, Rongxing and Sun, Sumei and Zhang, Qinyu},
  journal={IEEE Wireless Communications},
  year={2024},
  publisher={IEEE}
}

@article{feng2024goal,
  title={Goal-Oriented Wireless Communication Resource Allocation for Cyber-Physical Systems},
  author={Feng, Cheng and Zheng, Kedi and Wang, Yi and Huang, Kaibin and Chen, Qixin},
  journal={IEEE Transactions on Wireless Communications},
  year={2024},
  publisher={IEEE}
}

@inproceedings{kaul2012real,
  title={Real-time status: How often should one update?},
  author={Kaul, Sanjit and Yates, Roy and Gruteser, Marco},
  booktitle={2012 Proceedings IEEE INFOCOM},
  pages={2731--2735},
  year={2012},
  organization={IEEE}
}

@article{yates2021age,
  title={Age of information: An introduction and survey},
  author={Yates, Roy D and Sun, Yin and Brown, D Richard and Kaul, Sanjit K and Modiano, Eytan and Ulukus, Sennur},
  journal={IEEE Journal on Selected Areas in Communications},
  volume={39},
  number={5},
  pages={1183--1210},
  year={2021},
  publisher={IEEE}
}

@article{maatouk2020age,
  title={The age of incorrect information: A new performance metric for status updates},
  author={Maatouk, Ali and Kriouile, Saad and Assaad, Mohamad and Ephremides, Anthony},
  journal={IEEE/ACM Transactions on Networking},
  volume={28},
  number={5},
  pages={2215--2228},
  year={2020},
  publisher={IEEE}
}

@article{kriouile2024asymptotically,
  title={Asymptotically optimal delay-aware scheduling in queueing systems},
  author={Kriouile, Saad and Assaad, Mohamad and Larranaga, Maialen},
  journal={Journal of Communications and Networks},
  year={2024},
  publisher={KICS}
}

@article{malhotra2010exact,
  title={Exact top-k queries in wireless sensor networks},
  author={Malhotra, Baljeet and Nascimento, Mario A and Nikolaidis, Ioanis},
  journal={IEEE Transactions on Knowledge and Data Engineering},
  volume={23},
  number={10},
  pages={1513--1525},
  year={2010},
  publisher={IEEE}
}

@article{wang2020restless,
  title={Restless-UCB, an efficient and low-complexity algorithm for online restless bandits},
  author={Wang, Siwei and Huang, Longbo and Lui, John},
  journal={Advances in Neural Information Processing Systems},
  volume={33},
  pages={11878--11889},
  year={2020}
}

@article{mehta2018rested,
  title={Rested and restless bandits with constrained arms and hidden states: Applications in social networks and 5G networks},
  author={Mehta, Varun and Meshram, Rahul and Kaza, Kesav and Merchant, Shabbir N and Desai, Uday B},
  journal={IEEE Access},
  volume={6},
  pages={56782--56799},
  year={2018},
  publisher={IEEE}
}

@article{chen2021scheduling,
  title={Scheduling to minimize age of incorrect information with imperfect channel state information},
  author={Chen, Yutao and Ephremides, Anthony},
  journal={Entropy},
  volume={23},
  number={12},
  pages={1572},
  year={2021},
  publisher={MDPI}
}

@inproceedings{papadimitriou1994complexity,
  title={The complexity of optimal queueing network control},
  author={Papadimitriou, Christos H and Tsitsiklis, John N},
  booktitle={Proceedings of IEEE 9th annual conference on structure in complexity Theory},
  pages={318--322},
  year={1994},
  organization={IEEE}
}

@article{whittle1988restless,
  title={Restless bandits: Activity allocation in a changing world},
  author={Whittle, Peter},
  journal={Journal of applied probability},
  volume={25},
  number={A},
  pages={287--298},
  year={1988},
  publisher={Cambridge University Press}
}

@article{nino2023markovian,
  title={Markovian restless bandits and index policies: A review},
  author={Ni{\~n}o-Mora, Jos{\'e}},
  journal={Mathematics},
  volume={11},
  number={7},
  pages={1639},
  year={2023},
  publisher={MDPI}
}

@inproceedings{liu20222,
  title={Ao 2 I: Minimizing age of outdated information to improve freshness in data collection},
  author={Liu, Qingyu and Li, Chengzhang and Hou, Y Thomas and Lou, Wenjing and Reed, Jeffrey H and Kompella, Sastry},
  booktitle={IEEE INFOCOM 2022-IEEE Conference on Computer Communications},
  pages={1359--1368},
  year={2022},
  organization={IEEE}
}

@article{saurav2023scheduling,
  title={Scheduling to minimize age of information with multiple sources},
  author={Saurav, Kumar and Vaze, Rahul},
  journal={IEEE Journal on Selected Areas in Information Theory},
  volume={4},
  pages={539--550},
  year={2023},
  publisher={IEEE}
}

@article{peng2022communication,
  title={Communication scheduling by deep reinforcement learning for remote traffic state estimation with Bayesian inference},
  author={Peng, Bile and Xie, Yuhang and Seco-Granados, Gonzalo and Wymeersch, Henk and Jorswieck, Eduard A},
  journal={IEEE Transactions on Vehicular Technology},
  volume={71},
  number={4},
  pages={4287--4300},
  year={2022},
  publisher={IEEE}
}

@article{hunt1998nc,
  title={NC-approximation schemes for NP-and PSPACE-hard problems for geometric graphs},
  author={Hunt III, Harry B and Marathe, Madhav V and Radhakrishnan, Venkatesh and Ravi, Shankar S and Rosenkrantz, Daniel J and Stearns, Richard E},
  journal={Journal of algorithms},
  volume={26},
  number={2},
  pages={238--274},
  year={1998},
  publisher={Elsevier}
}

@article{talli2024push,
  title={Push-and Pull-based Effective Communication in Cyber-Physical Systems},
  author={Talli, Pietro and Mason, Federico and Chiariotti, Federico and Zanella, Andrea},
  journal={arXiv preprint arXiv:2401.10921},
  year={2024}
}

@article{getu2023making,
  title={Making sense of meaning: A survey on metrics for semantic and goal-oriented communication},
  author={Getu, Tilahun M and Kaddoum, Georges and Bennis, Mehdi},
  journal={IEEE Access},
  volume={11},
  pages={45456--45492},
  year={2023},
  publisher={IEEE}
}

@article{liu2022wireless,
  title={Wireless scheduling to optimize age of information based on earliest update time},
  author={Liu, Qingyu and Li, Chengzhang and Hou, Y Thomas and Lou, Wenjing and Reed, Jeffrey H and Kompella, Sastry},
  journal={IEEE Internet of Things Journal},
  volume={10},
  number={7},
  pages={6352--6366},
  year={2022},
  publisher={IEEE}
}

@article{jin2022deep,
  title={Deep reinforcement learning based scheduling for minimizing age of information in wireless powered sensor networks},
  author={Jin, Weiwei and Sun, Juan and Chi, Kaikai and Zhang, Shubin},
  journal={Computer Communications},
  volume={191},
  pages={1--10},
  year={2022},
  publisher={Elsevier}
}

@article{li2020age,
  title={Age-of-information aware scheduling for edge-assisted industrial wireless networks},
  author={Li, Mingyan and Chen, Cailian and Wu, Huaqing and Guan, Xinping and Shen, Xuemin},
  journal={IEEE Transactions on Industrial Informatics},
  volume={17},
  number={8},
  pages={5562--5571},
  year={2020},
  publisher={IEEE}
}

@inproceedings{jonah2024adaptive,
  title={Adaptive Retransmission for Wireless Sensor Nodes Under Bursty Error Conditions},
  author={Jonah, Sokipriala and Yoo, Seong Ki and Sthapit, Saurav},
  booktitle={2024 5th International Conference on Smart Sensors and Application (ICSSA)},
  pages={1--6},
  year={2024},
  organization={IEEE}
}

@inproceedings{ansar2017adaptive,
  title={Adaptive burst transmission scheme for WSNs},
  author={Ansar, Zeeshan and Dargie, Waltenegus},
  booktitle={2017 26th International Conference on Computer Communication and Networks (ICCCN)},
  pages={1--7},
  year={2017},
  organization={IEEE}
}

@inproceedings{trihinas2017admin,
  title={ADMin: Adaptive monitoring dissemination for the Internet of Things},
  author={Trihinas, Demetris and Pallis, George and Dikaiakos, Marios D},
  booktitle={IEEE INFOCOM 2017-IEEE conference on computer communications},
  pages={1--9},
  year={2017},
  organization={IEEE}
}

@article{pu2023aoi,
  title={Aoi-bounded scheduling for industrial wireless sensor networks},
  author={Pu, Chenggen and Yang, Han and Wang, Ping and Dong, Changjie},
  journal={Electronics},
  volume={12},
  number={6},
  pages={1499},
  year={2023},
  publisher={MDPI}
}

@article{ramakanth2024monitoring,
  title={Monitoring correlated sources: Aoi-based scheduling is nearly optimal},
  author={Ramakanth, R Vallabh and Tripathi, Vishrant and Modiano, Eytan},
  journal={IEEE Transactions on Mobile Computing},
  year={2024},
  publisher={IEEE}
}

@unpublished{jonah2025blackseacom,
  author    = {Sokipriala Jonah and Seong Ki Yoo and Saurav Sthapit},
  title     = {Energy Efficient Wake Up Radio Polling Based on Value of Information},
  note      = {Presented at the IEEE International Black Sea Conference on Communications and Networking (BlackSeaCom), Chisinau, Moldova, 23--26 June 2025},
  year      = {2025}
}

@article{tang2024learn,
  title={Learn to Schedule: Data Freshness-Oriented Intelligent Scheduling in Industrial IoT},
  author={Tang, Jianhua and Chen, Fangfang and Li, Jiaping and Liu, Zilong},
  journal={IEEE Transactions on Cognitive Communications and Networking},
  year={2024},
  publisher={IEEE}
}

@article{holm2023goal,
  title={Goal-oriented scheduling in sensor networks with application timing awareness},
  author={Holm, Josefine and Chiariotti, Federico and Kal{\o}r, Anders E and Soret, Beatriz and Pedersen, Torben Bach and Popovski, Petar},
  journal={IEEE Transactions on Communications},
  volume={71},
  number={8},
  pages={4513--4527},
  year={2023},
  publisher={IEEE}
}

@article{jonah2026adaptive,
  title={Adaptive Scheduling: A Reinforcement Learning Whittle Index Approach for Wireless Sensor Networks},
  author={Jonah, Sokipriala and Yoo, Seong Ki and Sthapit, Saurav},
  journal={IEEE Access},
  year={2026},
  publisher={IEEE}
}

@article{akbarzadeh2023learning,
  title={On learning Whittle index policy for restless bandits with scalable regret},
  author={Akbarzadeh, Nima and Mahajan, Aditya},
  journal={IEEE Transactions on Control of Network Systems},
  volume={11},
  number={3},
  pages={1190--1202},
  year={2023},
  publisher={IEEE}
}

@article{xiong2023finite,
  title={Finite-time analysis of whittle index based Q-learning for restless multi-armed bandits with neural network function approximation},
  author={Xiong, Guojun and Li, Jian},
  journal={Advances in Neural Information Processing Systems},
  volume={36},
  pages={29048--29073},
  year={2023}
}

@article{biswas2021learn,
  title={Learn to intervene: An adaptive learning policy for restless bandits in application to preventive healthcare},
  author={Biswas, Arpita and Aggarwal, Gaurav and Varakantham, Pradeep and Tambe, Milind},
  journal={arXiv preprint arXiv:2105.07965},
  year={2021}
}

@article{islam2019linearization,
  title={Linearization of the sensors characteristics: A review},
  author={Islam, Tarikul and Mukhopadhyay, SC},
  journal={International Journal on Smart Sensing and Intelligent Systems},
  volume={12},
  number={1},
  pages={1--21},
  year={2019},
  publisher={Massey University}
}

@article{kriouile2026minimizing,
  title={Minimizing the age of incorrect information for unknown Markovian source},
  author={Kriouile, Saad and Assaad, Mohamad},
  journal={IEEE Transactions on Networking},
  year={2026},
  publisher={IEEE}
}

@article{chen2024minimizing,
  title={Minimizing age of incorrect information over a channel with random delay},
  author={Chen, Yutao and Ephremides, Anthony},
  journal={IEEE/ACM Transactions on Networking},
  volume={32},
  number={4},
  pages={2752--2764},
  year={2024},
  publisher={IEEE}
}

\end{document}